\definecolor{darkgreen}{rgb}{0,0.5,0}
\newtheorem{theorem}{Theorem}[section]
\newtheorem{lemma}[theorem]{Lemma}
\newtheorem{meta-theorem}[theorem]{Meta-Theorem}
\newcommand{\eps}{\varepsilon}
\newcommand{\exclude}[1]{}
\newcommand{\FullOrShort}{full}
  \newcommand{\fullOnly}[1]{#1}	
  \newcommand{\shortOnly}[1]{}
    \newcommand{\fullOnly}[1]{}
	\newcommand{\shortOnly}[1]{#1}
\newcommand{\ITWonly}[1]{}
\newcommand{\noITW}[1]{#1}
\begin{document}

\date{}

\title{Rate-Distance Trade-offs for List-Decodable Insertion-Deletion Codes}

%\title{On the Rate of List-Decodable Insertion-Deletion Codes}

%\title{Optimal List-Decodable Insertion-Deletion Codes for any Alphabet Size}

\author{Bernhard Haeupler\footnote{Supported in part by NSF grants CCF-1527110, CCF-1618280, CCF-1814603, CCF-1910588, NSF CAREER award CCF-1750808, a Sloan Research Fellowship, and funding from the European Research Council (ERC) under the European Union's Horizon 2020 research and innovation program (ERC grant agreement 949272).} \\Carnegie Mellon University \& ETH Zurich\\ \texttt{haeupler@cs.cmu.edu} \and Amirbehshad Shahrasbi\footnotemark[1] \footnote{Supported in part by CRA Computing Innovation Postdoctoral Fellowship.}\\Microsoft\\ \texttt{ashahrasbi@microsoft.com}}

\maketitle
\thispagestyle{empty}

\begin{abstract}
This paper presents general bounds on the highest achievable rate for list-decodable insertion-deletion codes. In particular, we give novel outer and inner bounds for the highest achievable communication rate of any insertion-deletion code that can be list-decoded from any $\gamma$ fraction of insertions and any $\delta$ fraction of deletions. Our bounds simultaneously generalize the known bounds for the previously studied special cases of insertion-only, deletion-only, and zero-rate and correct other bounds that had been reported for the general case. 

%channels, i.e., novel outer and inner bound for the largest communication rate that a family of list-decodable insertion-deletion codes can achieve. We say that a family of codes are $(\gamma, \delta)$-list-decodable if there exists a list-decoding function for each member of the family that can recover its codewords after being transmitted through a channel introducing no more than $\gamma n$ insertions and $\delta n$-deletions up to a list of $L = \poly(n)$ codewords where $n$ denotes the block length of the code. 

%This paper seeks to address the fundamental question of capacity, i.e., the largest achievable rate by families of codes that are $(\gamma, \delta)$-list-decodable. We provide upper-bounds and lower-bounds for this quantity.

\end{abstract}
	
%\newpage
%\tableofcontents
\thispagestyle{empty}

\newpage
\setcounter{page}{1}

\section{Introduction}
Error-correcting codes are classic combinatorial objects that have been extensively studied since late 40s with broad applications in a multitude of communication and storage applications. While error-correcting codes are mostly studied within the setting that concerns symbol substitutions and erasures (i.e., Hamming-type errors), there has been a recent rise of interest in codes that correct from synchronization errors, such as insertions and deletions, from both theoretical~\cite{brakensiek2016efficient,bukh2017improved,guruswami2016efficiently,guruswami2017deletion,haeupler2017synchronization,haeupler2017synchronization3,haeupler2017synchronization2, haeupler2019near,cheng2019synchronization, liu2019list,liu2019explicit,  haeupler2019optimal, cheng2019block, cheng2018deterministic,guruswami2019optimally,cheng2020efficient,cheraghchi2019overview} and practical perspectives~\cite{organick2017scaling,blawat2016forward,goldman2013towards,church2012next,yazdi2015dna,bornholt2016dna}. Such codes and their relevant qualities are defined in the same fashion as error-correcting codes, except that the minimum distance requirement is with respect to the pairwise \emph{edit distance} between code words.

Compared to error-correcting codes for Hamming errors synchronization codes are far less understood and many fundamental questions about them remain to be explored. One such important question is the rate-distance trade-off for (worst-cases) synchronization errors, i.e., determining the largest rate that any synchronization code can achieve in the presence of a certain amount of synchronization errors. We address this question in the list-decoding setting.

A code is list-decodable if there exists a decoder $D$ which, for any corrupted codeword (within the desired error bounds), outputs a small size list of codewords that is guaranteed to include the uncorrupted codeword.
More formally, an insertion-deletion code $C \subseteq \Sigma^n$ (or insdel code, for short) is $(\gamma,\delta,L)$-list-decodable if there exists a function $D:\Sigma^*\to 2^C$ such that $|D(w)| \leq L$ for every $w \in \Sigma^*$ and for every codeword $x \in C$ and every word $w$ obtained from $x$ by at most $\gamma \cdot n$ insertions and at most $\delta\cdot n$ deletions, it is the case that $x \in D(w)$. The parameter $L$ is called the list-size.
These definitions naturally extend to families of codes with increasing block lengths in the usual way: A family of codes is $(\gamma,\delta, L(\cdot))$-list decodable if  each member of the family is $(\gamma,\delta, L(n))$-list decodable where $n$ denotes the block length. Often the function $L$ is omitted and a family of $q$-ary codes $\mathcal{C}$ is said to be $(\gamma,\delta)$-list decodable if there exist some polynomial function $L(\cdot)$ for which $\mathcal{C}$ is $(\gamma,\delta, L(\cdot))$-list decodable. The rate $R$ of a family of $q$-ary codes $\mathcal{C}$ is defined as $R = \lim_{n \rightarrow \infty} \frac{\log_q |C_n|}{n}$.

The fundamental question studied in this paper is to understand the inherent trade-off between the communication rate of a $q$-ary list-decodable insdel code and the amount of synchronization errors it can correct, i.e., the error parameters $\gamma$ and $\delta$. For every fixed alphabet size $q$, this trade-off can be nicely plotted as a 3D-surface in a 3D-chart which plots the maximum communication rate on the $z$-axis for all $\gamma$ and $\delta$ (plotted on the x- and y-axes respectively). \noITW{See \Cref{fig:q=5-intro-simple} for an example of such a 3D-plot.}\ITWonly{See \Cref{fig:q=5-intro} for an example of such a 3D-plot.}

\global\def\FigureIntroSimple{
 \begin{figure}[h]
     \centering
     \includegraphics[scale=.5]{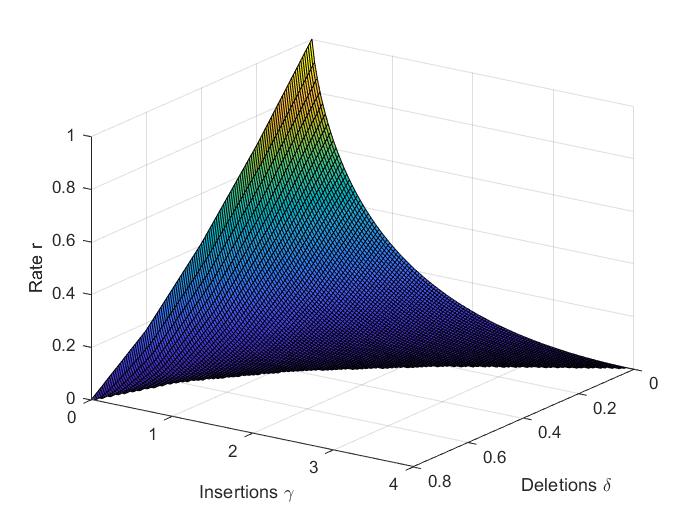}
     \caption{Depiction of our outer bound for $q=5$.}
     \label{fig:q=5-intro-simple}
 \end{figure} 
}

\global\def\FancyPicture{
\noITW{
\begin{sidewaysfigure}
    \centering
    \includegraphics[width=1\textwidth]{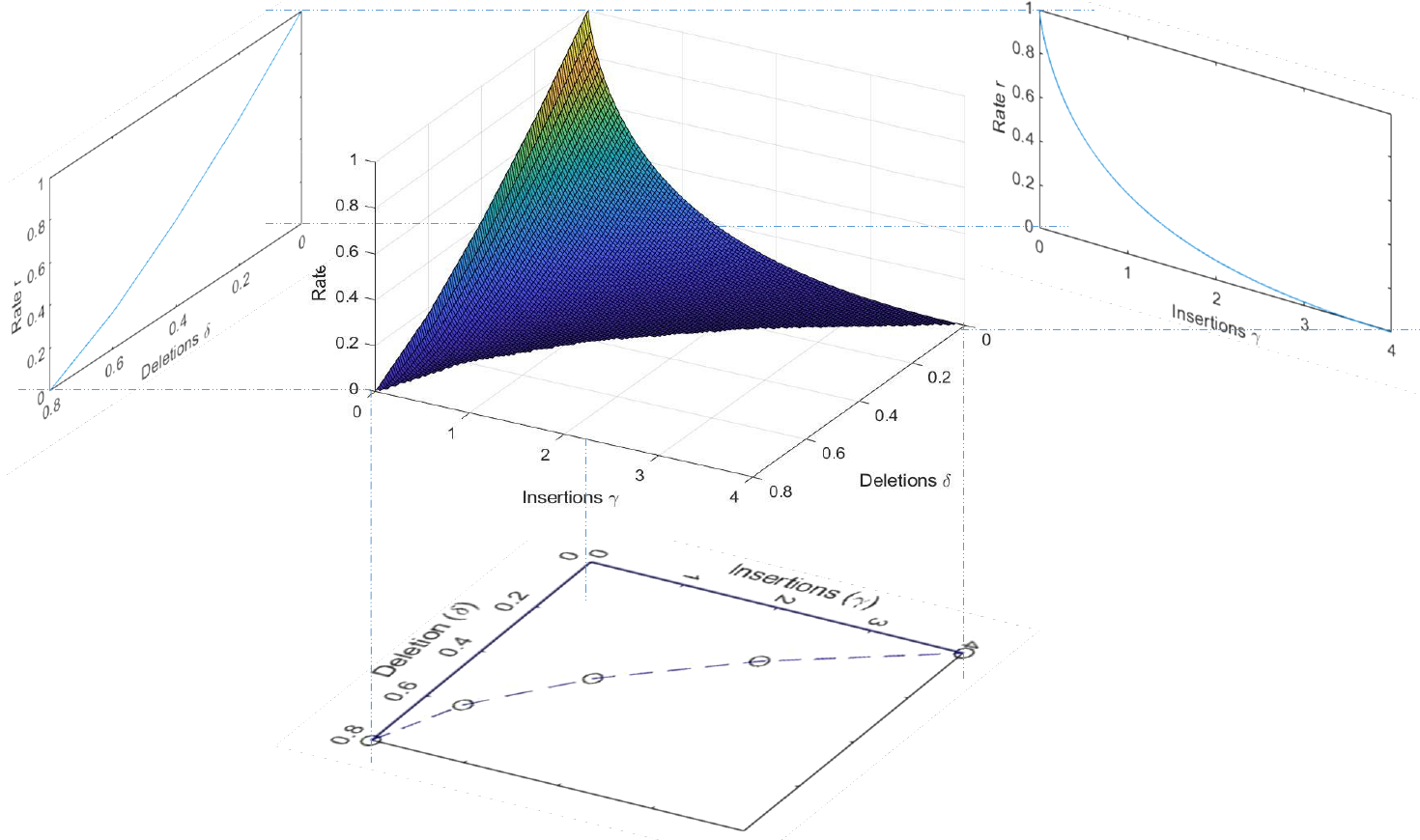}
    \caption{Depiction of our outer bound for $q=5$ and its three projections for the insertion-only case (on the right), the deletion-only case (on the left), and the zero-rate case (on the bottom). The projection graphs are exactly matching the state-of-the-art results of \cite{haeupler2018synchronization4,guruswami2019optimally}.}
    \label{fig:q=5-intro}
\end{sidewaysfigure}
}\ITWonly{\captionsetup{belowskip=-15pt}
\begin{figure}
    \centering
    \includegraphics[width=.95\linewidth]{Fancy_Plot_no_floor.pdf}
    \caption{\footnotesize Depiction of our outer bound for $q=5$ and its three projections for the insertion-only case (on the right), the deletion-only case (on the left), and the zero-rate case (on the bottom). The projection graphs are exactly matching the state-of-the-art results of \cite{haeupler2018synchronization4,guruswami2019optimally}.}
    \label{fig:q=5-intro}
\end{figure}

}
}
\noITW{\FigureIntroSimple}
\fullOnly{
\FancyPicture
}

Of course, determining the exact communication rate values for any $q$ and any non-trivial values of $(\gamma,\delta)$ is beyond the capability of current techniques. Prior work (described in \cref{sec:prior-work}) has furthermore mainly focused on obtaining a better understanding of certain special cases, which correspond to projections or cuts of the general trade-off plot. In particular, \Cref{fig:q=5-intro} shows the 2D cuts/projections onto the xz- and yz-planes, which correspond to the insertion-only setting with $\delta=0$ and the deletion-only setting with $\gamma=0$, as well as the projection/cut onto the yz-plane specifying for which error rate combinations of $\gamma$ and $\delta$ the communication rate hits zero. See \Cref{fig:q=5-intro} for examples of these three 2D-projections. It has also been studied how the shape of the 3D-plot changes asymptotically as $q$ gets larger. 

\subsection{Our Results}% and Organization of the Paper}

This paper is among the first to give results for the entirety of the 3D trade-off between communication rate and the two error rates for every fixed alphabet size $q$. We primarily focus on giving good outer bounds, i.e., impossibility results proving limits on the best possible communication rate (for any given $\gamma,\delta$, and $q$). The novel outer bounds we prove are given in the 3D-plot of \Cref{fig:q=5-intro} for an alphabet size of $q = 5$ (similar plots for any given $q$ apply). We develop these outer bounds in \cref{sec:outer bound}. (See \cref{thm:outer bound})

A notable property of our new outer bound is that, for every $q$, it exactly matches the best previously known results on all three aforementioned projections/cuts. That is, the outer bound implied for deletion-only codes (i.e., the cut on $\gamma = 0$ plane) matches the deletion-only bound from \cite{haeupler2018synchronization4}. Similarly, we match the best insertion-only bounds known (also from \cite{haeupler2018synchronization4}) when restricting or projecting our new general outer bound result to the $\delta = 0$ plane. Finally, the error resilience implied by our bound (i.e., where the curve in \Cref{fig:q=5-intro} hits the floor) precisely matches the list-decoding error resilience curve for insertions and deletions as identified by \cite{guruswami2019optimally}. As such, our bound fully encapsulates and truly generalizes the entirety of the current state-of-the-art of the fundamental rate-distance trade-off for list-decodable insertion-deletion codes\noITW{ for any fixed alphabet size $q$}. 

Lastly, for the sake of completeness and as a comparison point, we also provide a general inner bound in \cref{sec:inner_bound}. This general existence result is obtained by analyzing the list-decodability of random codes. We do this mainly through a simple bound on the size of the insertion-deletion sphere. \noITW{\Cref{fig:q=5-upper-and-lower} illustrates this inner bound in contrast to the outer bound depicted in \Cref{fig:q=5-intro}, also for $q=5$.} It is worth noting that, in contrast to our outer bound, the cut onto the xy-plane does not match the precise error resilience identified (through matching inner and outer bounds) in \cite{guruswami2019optimally}. However, the cuts onto the xz and yz planes do match the inner bounds of \cite{haeupler2018synchronization4} for insertion-only and deletion-only cases, which were also derived by analysis of random codes. We generally believe our outer bounds to be closer to the true zero-error list-decoding channel capacity.% under  of channels with insertion-deletion errors. 
%$\delta$ fraction of deletions and $\gamma$ fraction of insertions. 

% \begin{figure}
%     \centering
%     \includegraphics[scale=0.4]{}
%     \caption{Depiction of our inner and outer bounds for $q=5$. The more transparent surface is the outer bound of \cref{sec:outer bound} and the surface underneath is the inner bound derived in \cref{sec:inner_bound}}
%     \label{fig:q=5-upper-and-lower}
% \end{figure} 

\global\def\FigureInnerAndOuter{
\begin{figure}
\begin{subfigure}{.5\textwidth}
  \centering
  \includegraphics[width=1.06\linewidth]{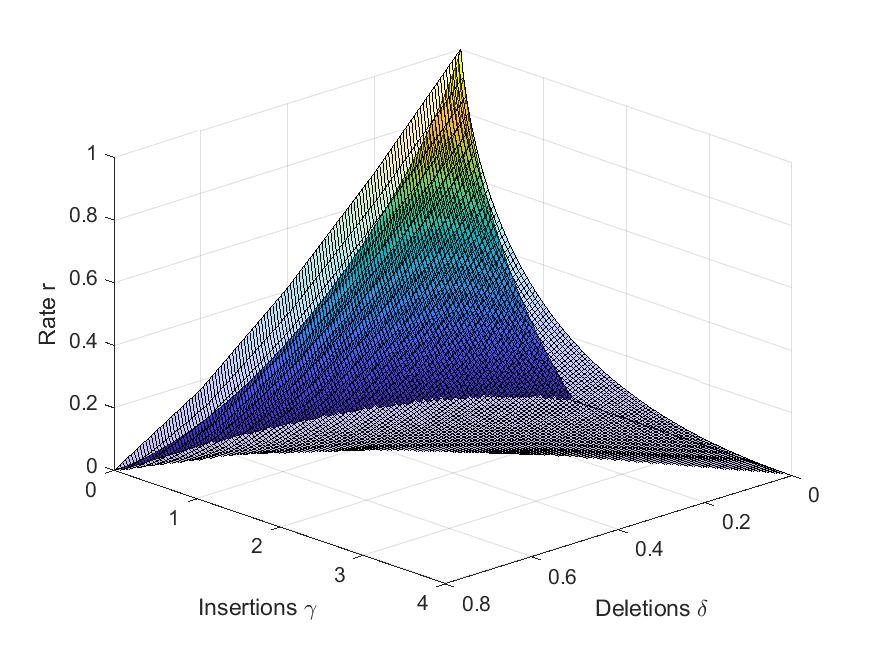}
\end{subfigure}%
\begin{subfigure}{.5\textwidth}
  \centering
  \includegraphics[width=1.06\linewidth]{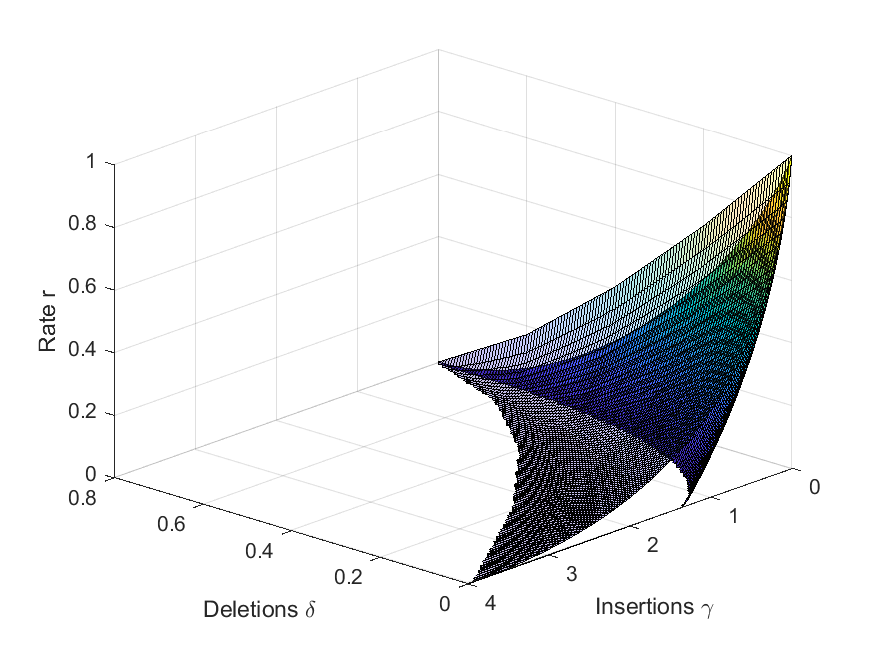}
\end{subfigure}
\caption{Depiction of our inner and outer bounds for $q=5$ from two angles. The more transparent surface is the outer bound of \cref{sec:outer bound} and the surface underneath is the inner bound derived in \cref{sec:inner_bound}.}
\label{fig:q=5-upper-and-lower}
\end{figure}
}
\noITW{\FigureInnerAndOuter}

\subsection{Related Work}\label{sec:prior-work}
This paper studies the fundamental rate-distance trade-off for error correcting codes which are capable of list-decoding from worst-cases insertions and deletions, a topic which has attracted significant attention over the last three years~\cite{guruswami2017deletion,wachter2017list,hayashi2018list,haeupler2018synchronization4,guruswami2019optimally,liu2019list}. We summarize these prior works in detail in this section. The multitude of related work on similar questions, such as, (efficient) list-decoding from Hamming errors, unique-decodable insdel codes, or decoding from random insertions or deletions are too many to list or discuss here. Instead, we refer the interested reader to the following (recent) surveys \cite{cheraghchi2019overview,mitzenmacher2009survey,guruswami2004list,mercier2010survey,haeupler2020survey}, which give detailed accounts of such works. 

As noted above, with the exception of~\cite{liu2019list}, mostly special cases of the general rate-distance trade-off for list-decodable insdel codes have been studied up to now. This includes in particular (combinations of) the deletion-only case (with $\gamma=0$), the insertion-only case (with $\delta=0$), the zero-rate regime or resilience case asking for what extremal values of $(\gamma,\delta)$ a non-zero rate can be obtained, and the case of large alphabets where the alphabet size $q=O(1)$ is allowed to be a large constant that can depend on the error rates $(\gamma, \delta)$.

\subsubsection{List-Decodable Insdel Codes Over Large Constant-Size Alphabets}
The rate-distance tradeoff for list-decodable error correcting codes has been studied in \cite{haeupler2018synchronization4} under the large alphabet setting, that is the question of finding the largest possible achievable rate that $(\gamma,\delta)$-list-decodable families of codes can achieve as long as their alphabet size is constant $q=O_{\gamma, \delta}(1)$ (i.e., independent of the block length). 
Using a method of constructing insdel codes by indexing ordinary error-correcting codes with synchronization strings introduced in \cite{haeupler2017synchronization}, \cite{haeupler2018synchronization4} shows the following: For every $\delta \in (0, 1)$, $\gamma \geq 0$, and sufficiently small $\eps > 0$, there exists an efficient family of $(\gamma,\delta)$-list-decodable codes over an alphabet of size $q=O_{\gamma, \delta, \eps}(1)$ that achieve a rate of $1-\delta-\eps$ or more. It is easy to verify that no such family of codes can achieve a rate larger than $1-\delta$.

The result of \cite{haeupler2018synchronization4} points out an interesting and indeed very drastic distinction between insertions and deletions in the list-decoding setting. In the  unique-decoding setting the effect of insertions and deletions are symmetric, and the rate-distance tradeoff can be fully measured solely in terms of the edit-distance between codewords. For list-decoding it turns out that insertions behave completely different than deletions. Indeed while any $\delta$ fraction of deletions will definitely reduce the rate at the very least to $1 - \delta$, in the very extreme the impact of insertion errors can be fully compensated by taking the alphabet appropriately large. This is what makes the maximum achievable rate for arbitrarily large constant alphabets merely a function of the deletion error rate $\delta$. This stark distinction in the effects insertions and deletions have on the rate-distance tradeoff for list-decodable codes is the reason why it is crucial to use the two parameters $\gamma$ and $\delta$ to keep track of insertions and deletions separately.

\subsubsection{Error Resilience of List-Decodable Insdel Codes}

An important special case of the rate-distance trade-off for list-decodable insertion-deletion codes is the question of the best possible \emph{error resilience}. In particular, the question of \emph{``what is the ``largest" fraction of errors against which list-decoding is possible for some positive-rate code''}\noITW{ -- or differently speaking, at what point(s) the maximum achievable rate of a list-decodable code becomes zero}. Understanding this question is, in some way, a prerequisite to meaningfully talk about more general positive rates. Nevertheless, even when restricted to binary deletions-only or insertions-only codes, finding good bounds on the error resilience is highly non-trivial (in contrast to the Hamming case)~\cite{guruswami2017deletion,guruswami2019optimally,hayashi2018list,guruswami2022zero} and has only recently been solved~\cite{guruswami2019optimally}. (along with the general case where insertions and deletions occur together.)

For deletion-only codes, i.e., the special case where $\gamma = 0$, Guruswami and Wang~\cite{guruswami2017deletion} gave binary codes that are list-decodable from a $\delta = \frac{1}{2} - \epsilon$ fraction of errors attaining a $\poly(\epsilon)$ rate for any $\epsilon > 0$. This implies that the error resilience for deletion coding is precisely $\delta_0 = \frac{1}{2}$ since, with a fraction of deletions $\delta \geq \frac{1}{2}$, an adversary can simply eliminate all instances of the least frequent symbol and convert any codeword from $\{0, 1\}^n$ into either $0^{n/2}$ or $1^{n/2}$.

In 2017 a work of Wachter-Zeh~\cite{wachter2017list} gave Johnson-type bounds on list-decodability and list-sizes of codes given their minimum edit-distance. In 2018, Hayashi and Yasunaga~\cite{hayashi2018list} made corrections to the results presented in \cite{wachter2017list}, and further showed that such bounds give novel results for the insertion-only case of resilience. In particular, they prove that the codes introduced by Bukh, Guruswami, and H\aa stad~\cite{bukh2017improved} can be list-decoded from up to $\gamma = 0.707$ fraction of insertions (and no deletions) while maintaining a positive-rate. 

Very recently, Guruswami et al.~\cite{guruswami2019optimally} improved this fraction of insertions to an optimal $\gamma < 1$. Much more generally \cite{guruswami2019optimally} were able to tightly and fully identify the error resilience region for codes that are list-decodable from a mixture of insertions and deletions, i.e., determine exactly and for any given $q$ the set of all $(\gamma, \delta)$s where the largest achievable rate for $q$-ary $(\gamma,\delta)$-list decodable codes is non-zero. This fully resolved the zero-rate projection of the question addressed in this paper (shown in the bottom 2D chart of \Cref{fig:q=5-intro}).

\subsubsection{Alphabet dependent rate results for the deletion-only and insertion-only case}

The two other projections, i.e., bounds on the highest achievable rate for the insertion-only ($\delta = 0$) and deletion-only ($\gamma = 0$) cases in dependence on $q$ and the error parameter ($\gamma$ and $\delta$ respectively) were given by Haeupler et al.~\cite{haeupler2018synchronization4}. These projections are shown in \Cref{fig:q=5-intro} to the right and left respectively. The inner bounds presented in \cite{haeupler2018synchronization4} are derived by analyzing list-decoding properties of random codes. Here, we briefly review (the ideas of) the outer bounds from \cite{haeupler2018synchronization4} as these will be helpful for the remainder of this paper.

\paragraph{Deletion-only case.} A simple observation for deletion-only channels is that no family of positive-rate $q$-ary codes can be list-decoded from $\delta \geq 1-\frac{1}{q}$ fraction of deletions. This is due to a simple strategy that adversary can employ to eliminate all occurrences of all symbols of the alphabet except the most frequent one to convert any sent codeword into a word like $a^{n(1-\delta)}$ for some $a\in[q]$. \cite{haeupler2018synchronization4} suggests a similar strategy called \emph{Alphabet Reduction} for the adversary when $\delta = \frac{d}{q}$ for some integer $d$. With $\delta = \frac{d}{q}$ fraction of deletions, an adversary can remove all instances of the $d$ least frequent symbols and, hence, convert any transmitted codeword into a member of an ensemble of $(q-d)^{n(1-\delta)}$ strings. This implies an outer bound of $\frac{\log (q-d)^{n(1-\delta)}}{n\log q} = (1-\delta)\left(1-\log_q\frac{1}{1-\delta} \right)$ on the largest rate achievable by list-decodable deletion codes for special values of $\delta = \frac{d}{q}$ where $d = 1, 2, \cdots, q-1$. Using a simple time sharing argument between the alphabet reduction strategy over these points, \cite{haeupler2018synchronization4} provides a piece-wise linear outer bound for all values of $0 < \delta < 1-\frac{1}{q}$.

\paragraph{Insertion-only case.} In an insertion channel, the received word contains the sent codeword as a subsequence. To provide an outer bound on the highest achievable rate by insertion codes, \cite{haeupler2018synchronization4} used the probabilistic method: For a given codeword $x\in[q]^n$, \cite{haeupler2018synchronization4} computes the probability of a random string $y\in[q]^{n(1+\gamma)}$ containing $x$ as a subsequence. Having this quantity, one can compute the expected number of codewords of a given code $C$ with rate $r$ that are contained in a random string $y\in[q]^{n(1+\gamma)}$. Note that if $r$ is so high that this expectation is exponentially large in terms of $n$, then, by linearity of expectation, there exists some string $\bar{y}\in[q]^{n(1+\gamma)}$ which contains exponentially many codewords of $C$ which is a contradiction to its list-decodability from $\gamma n$ insertions. This implies an outer bound for the communication rate which we describe in more details below.

\subsubsection{General Case}

% A recent work by Liu et al.~\cite{liu2019list} also studies the rate of list-decodable insertion-deletion codes and puts forward an assortment of results in different settings. The only result from~\cite{liu2019list} that is directly relevant to our work is an inner bound derived by analysis of random insertion-deletion codes.

%A recent work by Liu et al.~\cite{liu2019list} is the first to study the rate of list-decodable insertion-deletion codes in the same full generality as this paper. Unfortunately, after discovering direct contradictions between our results and the claims in \cite{liu2019list} we identified several correctness issues\footnote{Equation (6) in the proof of Theorem 1 is incorrect. The problem stems from the implicit assumption that the volume of a ball is independent of its center - which is true for Hamming balls but not insertion-deletion balls. Another issue stems from Lemma 8 which sums up volume estimates for insertion balls to lower bound the volume of their union without taking overlaps into account. All references above are with respect to the current arXiv version (v5) of \cite{liu2019list}.} with key approaches of~\cite{liu2019list}, particularly in the outer bounds. These issues seem hard to fix without substantially new ideas\footnote{We confirmed this with the authors of \cite{liu2019list} when we shared our results and these correctness issues mid July 2020. A correction to \cite{liu2019list} without the outer bound results is currently forthcoming.}. 

Liu et al.~\cite{liu2019listarxiv} was the first and only other work studying the rate of list-decodable insertion-deletion codes in  full generality, like this paper. After direct contradictions between the results reported here and the claims in \cite{liu2019listarxiv} were discovered, several correctness issues with key approaches of~\cite{liu2019listarxiv} for outer bounds were identified. These  results have been removed in \cite{liu2019list}, the final version of \cite{liu2019listarxiv}. The underlying issues seem hard to fix without substantially new ideas, as also reported in the acknowledgements of \cite{liu2019list}. As a result, \cite{liu2019list} is less directly relevant to this work, with the largest overlap being the inner bounds, similarly derived via a simple analysis of random insertion-deletion codes. Our bound is stronger for all pairs $(\gamma, \delta)$ when $q\geq 3$. \ITWonly{(See the full version for a proof.)}

%Given how intimately the errors are connected to the lack of symmetry that makes insertion-deletion error combinatorially so much harder than (codes for) Hamming errors, it is unclear whether these errors can be fixed or circumvented (more details are given in \Cref{appdx:errors}).

%inner or outer bounds can be fixed.
%the current time  
%We give more details in Appendix \Cref{appdx:errors}.
%which lead to results 
%capacity in the general case where both insertion and deletion errors are considered. However, as of the time of writing this document, \cite{liu2019list} is undergoing revision to fix a correctness issue. Therefore, we defer the comparison between our results and the results of Liu et al.~\cite{liu2019list} to a later time.
%\todo[inline]{Add more details.}

\section{Outer Bound\noITW{s}}\label{sec:outer bound}
\noITW{
\subsection{Linear Outer Bounds from Resilience Results}

We start this section by providing a simple outer bound for best possible rate of an insdel code by generalizing the tight results for the resilience region of \cite{guruswami2019optimally} into a rate bound for any $(\gamma,\delta)$. Recall that the resilience region $F_q$ for any (integer) alphabet size $q >2$ is defined as the set of error rates for which there exists list-decodable codes with positive rate\noITW{, i.e.,
$$F_q = \left\{ (\gamma,\delta) \Big| 0 \leq \delta \leq \frac{q-1}{q}, 0 \leq \gamma \leq q-1, 
\exists \  (\gamma,\delta)\text{-list dec. $q$-ary code family with positive rate}\right\}.$$
In \cite{guruswami2019optimally}, the following exact description of $F_q$ was given:

\begin{theorem}[Theorem 1.3 of \cite{guruswami2019optimally}]\label{thm:feasibility-region}
For any positive integer $q \geq 2$, the resilience region $F_q$ is exactly the concave polygon defined over vertices $\left(\frac{i(i-1)}{q}, \frac{q-i}{q}\right)$ for $i = 1, \cdots, q$ and $(0, 0)$, not including the borders except the two segments
$\left[(0, 0), (q-1, 0)\right)$ and $\left[(0, 0), \left(0, 1-1/q\right)\right)$.\noITW{ (See \Cref{fig:actual-region})}\\
In particular, for any $\eps > 0$ and any $(\gamma, \delta) \in (1-\eps)F_q$, there exists a family of $q$-ary codes with positive rate that is $(\gamma, \delta)$-list-decodable. Further, for any $(\gamma, \delta) \not\in F_q$ there exists no $(\gamma, \delta)$-list-decodable family of $q$-ary codes with positive rate.

%that can be list decoded from a $\delta$-fraction of deletions plus a $\gamma$-fraction of insertions.

%, as long as $(\gamma, \delta) \in (1-\eps)F_q$, are efficiently $L(n)$-list decodable from any $\delta$-fraction of deletions and $\gamma$-fraction of insertions where $n$ denotes the block length of the code, $L=O(\exp(\exp(\exp(\log^*n))))$, and the code achieves a positive rate of $\exp\left(-\frac{1}{\eps^{10}}\log^2 \frac{1}{\eps}\right)$. 

%Further, for any pair of positive real numbers $(\gamma, \delta) \not\in F_q$, there exists no infinite family of $q$-ary codes with positive rate that can be list decoded from a $\delta$-fraction of deletions plus a $\gamma$-fraction of insertions.
\end{theorem}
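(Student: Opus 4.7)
The theorem requires both an outer bound (no positive-rate $q$-ary code is $(\gamma,\delta)$-list decodable when $(\gamma,\delta)\notin F_q$) and an inner bound (such codes exist throughout $(1-\eps)F_q$). My plan is to attack the outer bound through a family of adversarial strategies, one per vertex of the polygon, and to fill in the rest of the complement by time-sharing and monotonicity.

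For the vertex $(i(i-1)/q,\,(q-i)/q)$ with $i\in\{1,\ldots,q\}$, I would construct an adversary that compresses the entire codebook into only polynomially many possible output strings while respecting the stated budgets. Given $x\in[q]^n$, the attack first picks $S\subseteq[q]$ of size $i$ equal to the $i$ most frequent symbols of $x$, which guarantees that at least $in/q$ symbols of $x$ lie in $S$. The adversary then deletes every non-$S$ symbol plus enough $S$-symbols to leave exactly $m=in/q$ residual symbols, using exactly $(q-i)n/q=\delta n$ deletions. Next, invoking the elementary fact that every string of length $m$ over an $i$-symbol alphabet is a subsequence of the periodic string $(a_1\cdots a_i)^m$, the adversary inserts $im-m=i(i-1)n/q=\gamma n$ symbols to transform the residue into this canonical target. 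Only the choice of $S$ depends on $x$, so the image of any codebook lies in a set of at most $\binom{q}{i}$ target strings; pigeonhole then forces any $(\gamma,\delta,L)$-list decoder to recover at most $L\binom{q}{i}$ codewords, so the rate must vanish. Boundary points between consecutive vertices are handled by time-sharing the two neighbouring attacks across disjoint segments of $x$ (multiplying the target count by a factor of $n$ for the split position, still sub-exponential), and points strictly outside $F_q$ follow a fortiori because the adversary has slack in its budget.

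For the inner bound I would invoke a standard random-coding template: prove a volume upper bound on the insertion-deletion ball of radius $(\gamma n,\delta n)$, then a union bound over $L$-subsets shows that a uniformly random code of sufficiently small positive rate is $(\gamma,\delta,L)$-list decodable for some polynomial $L$. Shrinking the parameters by $(1-\eps)$ provides the slack that the volume estimate needs in order to beat the boundary equation of $F_q$. The main obstacle, and the ingredient that actually pins down the precise shape of $F_q$, is the combinatorial tightness of the vertex attack: the subsequence bound against $(a_1\cdots a_i)^m$ is exactly calibrated so that the deletion budget lands at $(q-i)/q$ and the insertion budget at $i(i-1)/q$, and any looser supersequence construction would collapse the outer bound into a strictly smaller region. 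Everything else reduces to standard probabilistic and pigeonhole arguments.
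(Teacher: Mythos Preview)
The statement is quoted from \cite{guruswami2019optimally} and is not proved in the present paper; it is used here as a black box (see the sentence after \Cref{thm:feasibility-region} and the proof of \Cref{thm:convex-hull-outer bound}). So there is no in-paper proof to compare against, but your proposal can still be assessed against what \cite{guruswami2019optimally} actually does.

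Your outer-bound sketch is essentially the argument of \cite{guruswami2019optimally}: the vertex attack via ``reduce to the $i$ most frequent symbols, then pad to the periodic supersequence $(a_1\cdots a_i)^m$'' is exactly their strategy, and time-sharing between neighbouring vertex attacks across a split of the codeword handles the edges of the polygon. The present paper summarises it the same way (``a simple strategy transforming any sent string into one of a small $O_q(1)$ number of canonical strings''). That half is fine.

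The inner bound, however, has a real gap. A random-code-plus-ball-volume argument does \emph{not} reach the full polygon $F_q$: the zero-rate curve of the random-code analysis lies strictly inside $F_q$, by a constant margin, so the $(1-\eps)$ slack you propose cannot close the gap. The present paper says this explicitly in \Cref{sec:inner_bound} and in the discussion of \Cref{fig:q=5-upper-and-lower}: ``in contrast to our outer bound, the cut onto the xy-plane does not match the precise error resilience identified \ldots\ in \cite{guruswami2019optimally}.'' Concretely, for $q=2$ the random-code bound $1-(1+\gamma)H\bigl(\gamma/(1+\gamma)\bigr)$ already vanishes near $\gamma\approx 0.29$ on the $\delta=0$ axis, whereas $F_2$ extends to $\gamma=1$. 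What \cite{guruswami2019optimally} actually uses for achievability is a structured concatenated construction with carefully designed inner codes (in the Bukh--Guruswami--H\aa stad lineage) together with a bespoke list-decoding analysis; the combinatorics that lands the inner bound exactly on the polygon boundary is substantially more delicate than a volume estimate. Your plan would certify some strictly smaller feasible region but would not recover $F_q$.
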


\global\def\FigureResilience{
\begin{figure}[]
  \centering
  \includegraphics[height=2in]{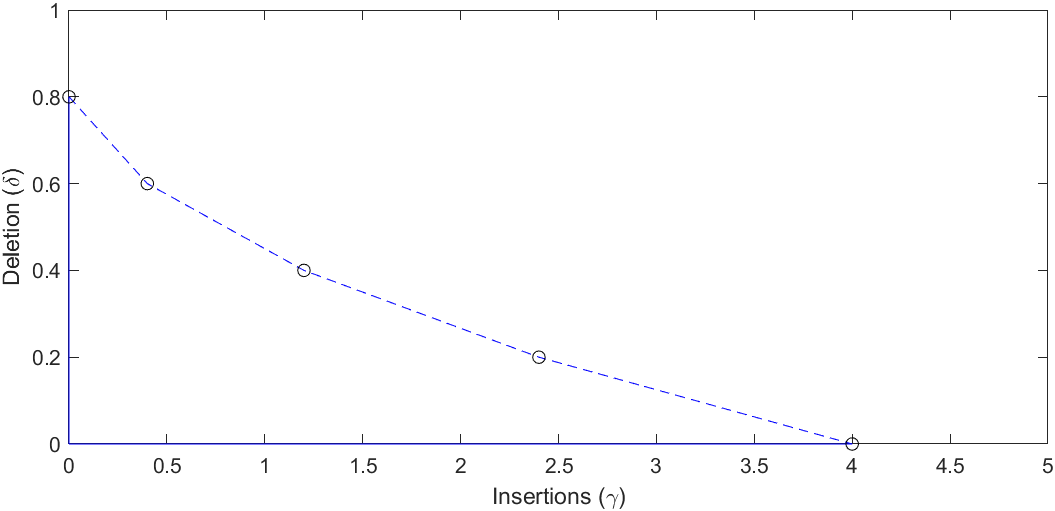}
  \caption{Resilience region for $q=5$.}\label{fig:actual-region}
\end{figure}
}
\FigureResilience
}\ITWonly{. See Theorem 1.3 of \cite{guruswami2019optimally} for an exact characterization of $F_q$.}

We show that the multiplicative distance to this resilience region gives a valid outer bound on the rate of any list-decodable insertion-deletion code:

\begin{theorem}\label{thm:convex-hull-outer bound}
For any alphabet size $q$ and any $(\gamma,\delta) \in F_q$ let $\alpha \geq 1$ be the smallest number\footnote{such minimum exists due to the definition of $F_q$.} such that $(\alpha \gamma,\alpha\delta) \notin F_q$. Any family of $(\gamma,\delta)$-list decodable $q$-ary codes cannot achieve a rate of more than $1 - 1/\alpha$.
\end{theorem}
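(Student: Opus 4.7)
The natural approach is a pigeonhole/padding reduction that converts any $(\gamma,\delta)$-list decodable family with rate exceeding $1-1/\alpha$ into a positive-rate $(\alpha\gamma,\alpha\delta)$-list decodable family, contradicting \Cref{thm:feasibility-region} (the displayed ``$R_q$'' in the statement being a typo for $F_q$). Formally, suppose toward a contradiction that $\{C_n\}$ is a $(\gamma,\delta)$-list decodable family of $q$-ary codes with list size $L(n) = n^{O(1)}$ and asymptotic rate $R > 1 - 1/\alpha$.

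For each (sufficiently large) $n$, set $n' := \lceil n/\alpha \rceil$ and $m := n - n'$. By pigeonhole there exists a string $p \in \Sigma^m$ shared as a prefix by a subcode $C'_n \subseteq C_n$ of size at least $|C_n|/q^m \geq q^{(R - 1 + 1/\alpha)n - o(n)}$. Truncating each word of $C'_n$ to its last $n'$ symbols yields a code $C''_n \subseteq \Sigma^{n'}$; since all codewords of $C'_n$ agree on the first $m$ coordinates this truncation is a bijection, so
\begin{equation*}
\text{rate}(C''_n) \;=\; \frac{\log_q |C'_n|}{n'} \;\geq\; \alpha\bigl(R - 1 + 1/\alpha\bigr) - o(1) \;=\; \alpha R - \alpha + 1 - o(1),
\end{equation*}
which is a strictly positive constant by the assumption on $R$.

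Next I would argue that $\{C''_n\}$ is $(\alpha\gamma, \alpha\delta, L(\alpha n'))$-list decodable. Given a received word $w'$ obtained from some $s \in C''_n$ by at most $\alpha\gamma n'$ insertions and at most $\alpha\delta n'$ deletions, form $w := p \cdot w'$. Relative to the length-$n$ codeword $p \cdot s \in C_n$, the word $w$ results from exactly the same errors, contributing $\alpha\gamma n' \approx \gamma n$ insertions and $\alpha\delta n' \approx \delta n$ deletions to the suffix; the prefix $p$ is untouched. Invoking the $(\gamma,\delta)$-list decoder of $C_n$ on $w$ produces a list of $L(n)$ codewords of $C_n$ containing $p \cdot s$; intersecting with $C'_n$ and projecting onto the last $n'$ coordinates gives a list of size at most $L(n)$ that contains $s$, which is polynomial in $n'$ since $n' = \Theta(n)$.

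Thus $\{C''_n\}$ is a positive-rate $(\alpha\gamma,\alpha\delta)$-list decodable family of $q$-ary codes. But $(\alpha\gamma,\alpha\delta) \notin F_q$ by the choice of $\alpha$, contradicting \Cref{thm:feasibility-region}. The main technical issue to be careful about is the rounding in $n' = \lceil n/\alpha \rceil$, which slightly perturbs the error counts; this is absorbed by taking $n \to \infty$ (the induced $o(1)$ slack in error rates is fine because $(\alpha\gamma,\alpha\delta)$ lies strictly outside $F_q$ once the boundary treatment of \Cref{thm:feasibility-region} is unpacked, or alternatively one works with $(\alpha - \eps)$ for arbitrary $\eps > 0$ and lets $\eps \to 0$). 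Beyond that, the argument is essentially just subcode-plus-truncation and should go through without further obstacles.
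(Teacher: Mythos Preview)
Your proposal is correct and follows essentially the same approach as the paper: pigeonhole on a length-$(1-1/\alpha)n$ prefix to extract a positive-rate subcode on the remaining $n/\alpha$ coordinates, then pad the common prefix back on to reduce $(\alpha\gamma,\alpha\delta)$-list decoding of the subcode to $(\gamma,\delta)$-list decoding of the original, contradicting \Cref{thm:feasibility-region}. Your treatment of the rounding/boundary issue is in fact more careful than the paper's, which simply writes $n_i(1-1/\alpha)$ without comment.
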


\noITW{
A different way of looking at this outer bound is to think of it as the collection of lines that connect every point on the (border of the) resilience region $F_q$ identified in \cref{thm:feasibility-region} on the $r=0$ plane and the point $(\gamma,\delta, r) = (0,0, 1)$ which indicates the trivial achievable rate of 1 in the absence of noise. (See \Cref{fig:outer bound-with-resilience}) 

\global\def\FigureLinearOuterBound{
\begin{figure}[]
  \centering
  \includegraphics[scale=.7]{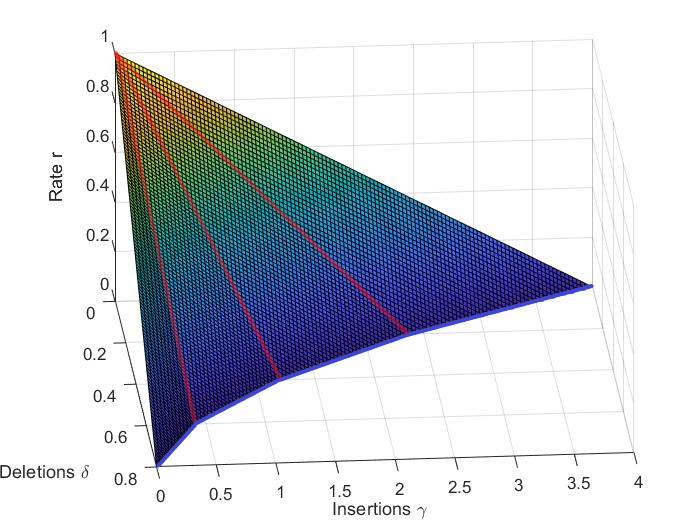}
  \caption{Illustration of the outer bound from \cref{thm:convex-hull-outer bound} for $q=5$.}\label{fig:outer bound-with-resilience}
\end{figure}
}
\FigureLinearOuterBound
}

\ITWonly{We refer the reader to the full version of this paper~\cite{FullVersion} for a formal proof but remark that the}
\noITW{The} proof of \Cref{thm:convex-hull-outer bound} is easy once one recalls how the outer bound for the feasibility region $F_q$ is proven in \cite{guruswami2019optimally}. It basically consists of a simple strategy transforming any sent string into one of a small $O_q(1)$ number of canonical strings, thus erasing almost all information sent. One can prove \Cref{thm:convex-hull-outer bound} by doing the same but only on an $\frac{1}{\alpha}$ fraction of the string. 
\noITW{Here though, for the sake of brevity and completeness, we present an alternative and shorter formal proof in the following.

\begin{proof}[Proof of \Cref{thm:convex-hull-outer bound}]
Assume for the sake of contradiction that for some $(\gamma,\delta) \in F_q$, for which $(\alpha \gamma,\alpha\delta) \notin F_q$, there exists a family of $(\gamma, \delta)$-list-decodable codes $\mathcal{C}$ with a higher rate than $1-\frac{1}{\alpha}$, i.e., codes $\mathcal{C} = \{C_1, C_2, \cdots\}$ with block lengths $n_1 < n_2 < \cdots$ and rates $r_1, r_2, \cdots$ that satisfy $r = \lim_{i\rightarrow \infty} r_i = 1-\frac{1}{\alpha}+\eps$ for some $\eps > 0$.

We convert this family of codes to a new family of codes $\mathcal{C}'$ by converting each code $C_i$ into a code $C'_i$ as follows: In all codewords of $C_i$, consider the $n_i\left(1-1/\alpha\right)$-long prefix. Among all such prefixes, let $p$ be the most frequent one. We set $C'_i$ to be a code containing all codewords of $C_i$ that start with $p$. Since all such codewords start with $p$, we omit the prefix $p$ from all such codewords. Note that the block length of $C'_i$ is $n'_i = n_i - n_i \left(1-1/\alpha\right) = n_i/\alpha$. Also, since there are $q^{n_i\left(1-1/\alpha\right)}$ $q$-ary strings of length $n_i\left(1-1/\alpha\right)$, 
\fullOnly{$$|C'_i| \geq \frac{|C_i|}{ q^{n_i\left(1-1/\alpha\right)} }=\frac{q^{n_i r_i}}{ q^{n_i\left(1-1/\alpha\right)} } = q^{n_i (r_i-1+1/\alpha)}.$$}
\shortOnly{$|C'_i| \geq \frac{|C_i|}{ q^{n_i\left(1-1/\alpha\right)} }=\frac{q^{n_i r_i}}{ q^{n_i\left(1-1/\alpha\right)} } = q^{n_i (r_i-1+1/\alpha)}.$}
This implies that the rate of $C'_i$ is at least 
\fullOnly{$$r'_i= \frac{\log_q |C'_i|}{n'_i} \geq \frac{n_i (r_i-1+1/\alpha)}{n'_i} = \alpha(r_i-1+1/\alpha)$$}
\shortOnly{$r'_i= \frac{\log_q |C'_i|}{n'_i} \geq \frac{n_i (r_i-1+1/\alpha)}{n'_i} = \alpha(r_i-1+1/\alpha)$}
and, hence, the rate of the family of codes $\mathcal{C}'$ is at least $\lim_{i\rightarrow\infty} r'_i \geq \alpha\eps > 0$.

Further, we claim that if $\mathcal{C}$ is $(\gamma, \delta, L(n))$-list decodable, then $\mathcal{C}'$ will be $(\alpha\gamma, \alpha\delta, L(\alpha n'))$-list decodable. To show this, we construct such list-decoder for all codes $C'_i \in \mathcal{C}'$ with input $y'$ by simply padding the most frequent $n_i(1-1/\alpha)$-prefix of codewords of $C_i \in \mathcal{C}$, $p$, in front of $y'$ and running the list-decoder of $C_i$ with input $y = p\cdot y'$. Among the list generated by the decoder of $C_i$, the ones that do not start with $p$ are withdrawn. The remaining strings will form the output of our list-decoder for $C'_i$ after omitting their prefix $p$. Note that this indeed gives a $(\alpha\gamma, \alpha\delta, L(\alpha n'))$-list-decoder since for any codeword $x\in C'_i$ that is $(\alpha\gamma, \alpha\delta)$-close to $y$, $p\cdot x \in C_i$ is $(\gamma, \delta)$-close to $p\cdot y$.

We were able to show that the family of codes $\mathcal{C}'$ achieves a positive rate and is $(\alpha\gamma, \alpha\delta)$-list decodable for $(\alpha\gamma, \alpha\delta) \notin F_q$. This is a contradiction to \cref{thm:feasibility-region} proving that the rate of $\mathcal{C}'$ may not exceed $1-\frac{1}{\alpha}$, thus, proving the theorem.
\end{proof}
}
%\subsection{Outer Bounds and Convexity of the Unachievability Region}
We \ITWonly{also }remark that one can interpret the outer bound from \Cref{thm:convex-hull-outer bound} as a convexity argument in the following manner: We know that no code can achieve a rate of one in the presence of even a small amount of noise. The point $(\gamma, \delta, r) = (0,0,1)$ is therefore part of the (in)feasibility boundary. Further, the resilience result from \cite{guruswami2019optimally} demonstrates that all points within the region $R = \{(\gamma, \delta, 0)\ |\ (\gamma, \delta) \not\in F_q\}$ are infeasible. \Cref{thm:convex-hull-outer bound} shows that any convex combination of $(0,0,1)$ and any point in $R$ is infeasible as well, implying a pyramid-shaped feasibility region with $F_q$ as its base and $(0, 0, 1)$ as its apex. \noITW{(See \Cref{fig:outer bound-with-resilience}.)} %Interestingly, our proof of \cref{thm:convex-hull-outer bound} is generic and does not depend on or use (our knowledge of) the shape of $F_q$. 
We remark that generally convex combinations of infeasible points in the rate-distance tradeoff are not known to be infeasible - even for much simpler settings including Hamming errors or unique-decoding. Even convexity results between known infeasible points can be quite challenging. Case in point, our tighter outer bound in \cref{sec:tight-outer-bound} is proven by showing infeasibility of convex combinations of (easier) infeasible points.

\subsection{Stronger Bounds Using Generalizations of Bounds from \cite{haeupler2018synchronization4}}\label{sec:tight-outer-bound}

While \Cref{thm:convex-hull-outer bound} gives a bound for all $(\gamma,\delta)$, it is easy to see that it can be quite far from guarantees given by other state-of-the-art outer bounds. This is especially apparent for the deletion-only and insertion-only cases where \Cref{thm:convex-hull-outer bound} implies fairly trivial rate bounds of $1 - \frac{\delta}{1-1/q}$ and $1 - \frac{\gamma}{q-1}$, respectively. In particular, for the insertion-only case the following outer bound of \cite{haeupler2018synchronization4} gives much tighter bounds than $1 - \frac{\gamma}{q-1}$:}
\ITWonly{We start by reminding the following outer bound for the insertion-only case from \cite{haeupler2018synchronization4}.}

\begin{theorem}[From \cite{haeupler2018synchronization4}]\label{thm:converse}
 For any alphabet size $q$ and error rate $\gamma < q-1$, any family of $q$-ary codes $\mathcal{C}$ which is list-decodable from a $\gamma$ fraction of insertions has a rate of no more than $1- \log_q(\gamma+1)-\gamma\left(\log_q\frac{\gamma+1}{\gamma} - \log_q \frac{q}{q-1}\right)$.
\end{theorem}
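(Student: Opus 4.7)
The plan is to invoke the probabilistic method sketched in \cref{sec:prior-work}. Fix a $(\gamma,0,L)$-list-decodable code $C\subseteq[q]^n$ with $L=\mathrm{poly}(n)$ and sample $y$ uniformly from $[q]^m$ where $m=(1+\gamma)n$. For every $y\in[q]^m$, any codeword $x\in C$ with $x\subseteq y$ must appear in the list-decoder's output on $y$ (since $y$ is obtainable from $x$ by exactly $\gamma n$ insertions), so $|\{x\in C : x\subseteq y\}|\le L$ pointwise. Taking expectations and using linearity,
$$L \;\ge\; \E_y\bigl[|\{x\in C : x\subseteq y\}|\bigr] \;=\; \sum_{x\in C}\Pr_y[x\subseteq y].$$
A good lower bound on $\Pr_y[x\subseteq y]$ therefore yields an upper bound on $|C|$.

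The heart of the proof is to show $p:=\Pr_y[x\subseteq y]$ is \emph{independent of $x$} and to compute it. I would use a greedy matching / random-walk viewpoint: scan $y$ from left to right while maintaining a pointer $j$ into $x$, advancing $j$ precisely when $y_i=x_{j+1}$; the pointer reaches $n$ iff $y$ contains $x$ as a subsequence. Because $y_i$ is uniform on $[q]$ independent of $y_{<i}$, and $x_{j+1}$ is a particular symbol determined by the history, each step advances with probability exactly $1/q$ regardless of \emph{which} symbol must be matched. Hence $p=\Pr[\mathrm{Bin}(m,1/q)\ge n]$, independent of $x$, which expands to $q^{-m}\sum_{k=n}^{m}\binom{m}{k}(q-1)^{m-k}$.

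Using just the $k=n$ summand gives $p\ge q^{-m}\binom{m}{n}(q-1)^{m-n}$. The standard entropy estimate $\binom{m}{n}\ge q^{m\,H_q(n/m)}/\mathrm{poly}(n)$ with $n/m=1/(1+\gamma)$ produces $m\,H_q(n/m)/n=(1+\gamma)H_q(1/(1+\gamma))=\log_q(1+\gamma)+\gamma\log_q\tfrac{1+\gamma}{\gamma}$. Since $p$ does not depend on $x$ we have $|C|\cdot p\le L$; taking $\log_q$, dividing by $n$, and letting $n\to\infty$ yields
$$r \;\le\; (1+\gamma) - \log_q(1+\gamma) - \gamma\log_q\tfrac{1+\gamma}{\gamma} - \gamma\log_q(q-1),$$
which, after using $1+\gamma-\gamma\log_q(q-1)=1+\gamma\log_q\tfrac{q}{q-1}$, rearranges into the claimed $1-\log_q(1+\gamma)-\gamma\bigl(\log_q\tfrac{1+\gamma}{\gamma}-\log_q\tfrac{q}{q-1}\bigr)$.

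The main conceptual obstacle is the $x$-independence of the insertion volume. A naive count of $|\{y:x\subseteq y\}|$ via arbitrary embeddings overcounts massively, and committing to a canonical (e.g., leftmost) embedding forces a sum over ``gap'' lengths with $(q-1)$-valued forbidden-neighbor factors that one must then argue collapses to an $x$-free expression. The greedy random-walk viewpoint dispatches all of this in one line by observing that the ``next forbidden symbol'' always removes exactly one value from $[q]$, reducing the question to a binomial tail. Everything downstream—isolating the dominant binomial term, invoking the entropy estimate, and rearranging—is routine calculation.
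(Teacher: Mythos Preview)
Your proof is correct and follows the same probabilistic-method strategy the paper sketches in \cref{sec:prior-work} and carries out (in a more general form) inside the proof of \cref{thm:outer bound}: draw a uniformly random $y\in[q]^{(1+\gamma)n}$, lower-bound $\Pr_y[x\subseteq y]$ by its dominant term, and conclude via linearity that $|C|$ cannot be too large. Your greedy random-walk observation that $\Pr_y[x\subseteq y]=\Pr[\mathrm{Bin}(m,1/q)\ge n]$ is a clean repackaging of the leftmost-embedding sum the paper writes down; note, however, that you slightly overstate the obstacle in the last paragraph, since the paper's leftmost-embedding probability $(1/q)^{n}(1-1/q)^{a_n-n}$ is already manifestly $x$-free and does not require any further ``collapse'' argument---the two formulations are just the binomial and negative-binomial descriptions of the same hitting-time event.
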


Next, we show how to use \Cref{thm:converse} in a black-box fashion to give a very clean and easily statable outer bound for settings with both insertions and deletions, but in which the fraction of deletions has a nice form\noITW{, in particular, is a multiple of $\frac{1}{q}$}.\ITWonly{ (a multiple of $\frac{1}{q}$)} This outer bound forms the backbone of our final result.

\begin{theorem}\label{thm:nicedelta}
 For any fixed alphabet size $q$, any insertion rate $\gamma < q-1$ and any deletion rate $\delta = \frac{d}{q}$ for some integer $d < q$, it is true that any family of $q$-ary codes $\mathcal{C}$ which is $(\gamma,\delta)$-list-decodable has a rate of at most $(1-\delta)\left[
\left(1 + \frac{\gamma}{1-\delta}\right)\log_q\frac{q-d}{\frac{\gamma}{1-\delta}+1}
-\frac{\gamma}{1-\delta}\cdot\left(\log_{q}\frac{q-d-1}{\frac{\gamma}{1-\delta}}\right)\right]$.
\end{theorem}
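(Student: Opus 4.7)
The plan is to reduce the $(\gamma, \delta)$-list-decoding problem to a pure-insertion list-decoding problem over a smaller alphabet --- by implementing an alphabet-reduction adversary in the spirit of the deletion-only bound of \cite{haeupler2018synchronization4} --- and then to invoke \cref{thm:converse} in a black-box fashion on the reduced problem. The hypothesis $\delta = d/q$ is what makes the alphabet reduction \emph{exact}: it lets the adversary spend the entire deletion budget $\delta n$ to erase $d$ symbol classes per codeword (plus possibly a few extra deletions for length normalization), converting the original problem into one of list-decoding a code over alphabet size $q' := q-d$ and block length $n(1-\delta)$ from an insertion rate of $\gamma' := \gamma/(1-\delta)$.

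For each $x \in C$, I will let $S_x \in \binom{[q]}{d}$ be the set of $d$ least frequent symbols in $x$ (ties broken arbitrarily); an averaging argument shows that the total count of $S_x$-symbols in $x$ is at most $dn/q = \delta n$. I then define the adversary map $A(x) \in ([q]\setminus S_x)^{n(1-\delta)}$ by first deleting all occurrences of $S_x$-symbols in $x$ and then, if the resulting string is longer than $n(1-\delta)$, further deleting trailing symbols to bring its length to exactly $n(1-\delta)$; the total number of deletions used is exactly $\delta n$. Partitioning $C = \bigsqcup_S C_S$ over the $\binom{q}{d}$ choices of $S \in \binom{[q]}{d}$ according to $S_x = S$, I will then consider the reduced subcodes $C'_S := A(C_S) \subseteq ([q]\setminus S)^{n(1-\delta)}$, each a code of block length $n(1-\delta)$ over an alphabet of size $q'$. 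The central step is to observe that each $C'_S$ is $(\gamma', 0, L)$-list-decodable over alphabet $q'$: for any $y \in ([q]\setminus S)^*$ of length at most $n(1-\delta)(1+\gamma')$ and any $x_1' \in C'_S$ that is a subsequence of $y$, pick any preimage $x \in C_S$ with $A(x) = x_1'$; then $y$ is $(\gamma, \delta)$-reachable from $x$, because we can compose the $\delta n$ deletions of $x \to x_1'$ with the at most $\gamma n$ insertions of $x_1' \to y$. Distinct $x_1'$ yield distinct chosen preimages $x$, so list-decodability of $C$ caps the number of such $x_1'$ by $L$.

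With this in hand, \cref{thm:converse} applied to each $C'_S$ gives $\log_{q'}|C'_S| \leq n(1-\delta)\, R'_{q'}(\gamma') + o(n)$, where $R'_{q'}(\gamma')$ denotes the insertion-only rate bound at alphabet $q'$ and insertion rate $\gamma'$. Combined with the observation that $|C| \leq L \cdot |A(C)| \leq L\binom{q}{d}\max_S |C'_S|$, dividing by $n$ and passing to the limit gives a rate bound of $(1-\delta)\, R'_{q'}(\gamma')\log_q(q')$; a straightforward base-change from $\log_{q'}$ to $\log_q$ rewrites this in the form claimed in the theorem. The main subtlety I expect will be the uniform-length requirement for \cref{thm:converse}: merely deleting all $S_x$-symbols would produce strings of varying lengths in $[n(1-\delta), n]$, so the trimming step is essential to make $C'_S$ live in a common ambient space $([q]\setminus S)^{n(1-\delta)}$ and apply \cref{thm:converse} in black-box form. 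It must then be verified carefully that this trimming remains within the $\delta n$ deletion budget and that the resulting (generally many-to-one) map $A$ still transports list-decodability correctly through the chain $x \to x_1' \to y$.
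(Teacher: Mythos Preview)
Your proposal is correct and follows essentially the same route as the paper's proof: alphabet reduction via the $d$ least frequent symbols (with trimming to normalize length), partition of $C$ by the removed subset $S$, and a black-box application of \cref{thm:converse} to each reduced subcode over alphabet size $q-d$ with insertion rate $\gamma/(1-\delta)$. The only notable difference is that you are more careful than the paper about the many-to-one nature of the map $A$, explicitly inserting the factor $L$ in $|C| \leq L\binom{q}{d}\max_S |C'_S|$; the paper glosses over this, but since $L$ is polynomial in $n$ it does not affect the rate either way.
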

\begin{proof}
Consider a code $C$ that is $(\gamma,\delta)$-list-decodable and assume that $\delta = \frac{d}{q}$ for some integer $d$. 
Assume that we restrict the adversary to utilize its deletions in the following manner: The adversary uses the $\frac{d}{q}$ deletion to remove all occurrences of the $d$-least frequent symbols of the alphabet. If there are remaining deletions, the adversary removes symbols from the end of the transmitted word.

Let us define the code $C'$ that is obtained from $C$ by deleting a $\delta$ fraction of symbols from each codeword of $C$ as described above. Note that the block length of $C'$ is $n'=n(1-\delta)$ and each of its codewords consist of up to $q'=q(1-\delta) = q-d$ symbols of the alphabet though this subset of size $q-d$ may be different from codeword to codeword. We partition the codewords of $C'$ into $q \choose q-d$ sets $C'_1, C'_2, \cdots, C'_{q \choose q-d}$ based on which $(q-d)$-subset of the alphabet they consist of.

Since $C$ is $(\gamma,\delta)$-list-decodable, each of the $C'_i$s are list-decodable from $\gamma n$ insertions. Therefore, \cref{thm:converse} implies that the size of each code $C'_i$ is no larger than
\noITW{$$q'^{n'\left[1- \log_{q'}(\gamma'+1)-\gamma'\left(\log_{q'}\frac{\gamma'+1}{\gamma'} - \log_{q'} \frac{q'}{q'-1}\right)\right]}$$}
\ITWonly{$q'^{n'\left[1- \log_{q'}(\gamma'+1)-\gamma'\left(\log_{q'}\frac{\gamma'+1}{\gamma'} - \log_{q'} \frac{q'}{q'-1}\right)\right]}$}
where $q'=q-d$, $n'=n(1-\delta)$, and $\gamma' = \frac{\gamma}{1-\delta}$. Therefore, the size of the code $C$ is no larger than
\noITW{
\fullOnly{$${q\choose q-d} q^{n(1-\delta)\left[\log_q q'- \log_{q}(\gamma'+1)-\gamma'\left(\log_{q}\frac{\gamma'+1}{\gamma'} - \log_{q} \frac{q'}{q'-1}\right)\right]}$$}
\shortOnly{${q\choose q-d} q^{n(1-\delta)\left[\log_q q'- \log_{q}(\gamma'+1)-\gamma'\left(\log_{q}\frac{\gamma'+1}{\gamma'} - \log_{q} \frac{q'}{q'-1}\right)\right]}$}
}\ITWonly{${q\choose q-d} q^{n(1-\delta)\left[\log_q q'- \log_{q}(\gamma'+1)-\gamma'\left(\log_{q}\frac{\gamma'+1}{\gamma'} - \log_{q} \frac{q'}{q'-1}\right)\right]}$ which implies the theorem statement.}
\noITW{and, consequently, its rate is no larger than
\begin{eqnarray*}
&(1-\delta)\left[\log_q (q-d)- \log_{q}\left(\frac{\gamma}{1-\delta}+1\right)-\frac{\gamma}{1-\delta}\cdot\left(\log_{q}\frac{\gamma+1-\delta}{\gamma} - \log_{q} \frac{q-d}{q-d-1}\right)\right]\\
=&(1-\delta)\left[
\left(1 + \frac{\gamma}{1-\delta}\right)\log_q\frac{q-d}{\frac{\gamma}{1-\delta}+1}
-\frac{\gamma}{1-\delta}\cdot\left(\log_{q}\frac{q-d-1}{\frac{\gamma}{1-\delta}}\right)
\right]\label{eqn:nice-delta-outer bound}.
\end{eqnarray*}
}
\ITWonly{
% \\$(1-\delta)\left[
% \left(1 + \frac{\gamma}{1-\delta}\right)\log_q\frac{q-d}{\frac{\gamma}{1-\delta}+1}
% -\frac{\gamma}{1-\delta}\cdot\left(\log_{q}\frac{q-d-1}{\frac{\gamma}{1-\delta}}\right)
% \right]\label{eqn:nice-delta-outer bound}.
% $
}
\end{proof}

Given the nice and explicit form of \Cref{thm:nicedelta} for any $q$ and $\gamma$ with multiple specific values of $\delta$, it seems tempting to conjecture that the restriction of $\delta$ is unnecessary making
$(1-\delta)\left[
\left(1 + \frac{\gamma}{1-\delta}\right)\log_q\frac{q-d}{\frac{\gamma}{1-\delta}+1}
-\frac{\gamma}{1-\delta}\cdot\left(\log_{q}\frac{q-d-1}{\frac{\gamma}{1-\delta}}\right)\right]$ a valid outer bound for any value of $\delta$ (and $\gamma$). This, however, could not be further from the truth. Indeed, for any $\delta$ not of the form restricted to by \Cref{thm:nicedelta}, there exists a $\gamma$ for which this extended bound is provably wrong because it contradicts the existence of the list-decodable codes constructed in \cite{guruswami2019optimally}.

In fact, for the valid points where $\delta$ is a multiple of $\frac{1}{q}$, the rate bound of \Cref{thm:nicedelta} hits zero at exactly the corner points of the piece-wise linear resilience region $F_q$ characterized by \noITW{\Cref{thm:feasibility-region}}\ITWonly{\cite{guruswami2019optimally}}. Taking this as an inspiration, one could try to extend the bound of \Cref{thm:nicedelta} to all values of $\delta$ by considering for each $q$ and each rate $r$ the roughly $\frac{q}{r}$ points where \Cref{thm:nicedelta} hits the  plane corresponding to rate $r$ and extend these points in a piece-wise linear manner to a complete 2D-curve for this rate $r$. This would give a rate bound for any $\gamma,\delta$, and $q$ as desired, which reduces to a piece-wise linear function for any fixed $r$ and \noITW{also} correctly reproduce $F_q$ for $r=0$. 

It turns out that this is indeed a correct outer bound. However, a stronger form of convexity, which takes full 3D-convex interpolations between any points supplied by \Cref{thm:nicedelta} and in particular combines points with different rates, also holds and is needed to give our final outer bound. %This stronger form of convexity gives a differentiable feasibility region for which $(\gamma,\delta)$ are feasible for $q$-ary codes of rate $r$, for any $r$ (except for $r=0$ where our bound also matches the polygonal resilience region $F_q$).

%To derive our final outer bound for the capacity where $\delta$ is not of the specific form required by \Cref{thm:nicedelta}, we give a convexity result for the outer bound of \Cref{thm:nicedelta}. In particular, we prove the following:

\begin{theorem}\label{thm:mainconvexitystatement}
For a fixed $q$, suppose that $(\gamma_0, \delta_0=\frac{d_0}{q})$ and $(\gamma_1, \delta_1=\frac{d_1}{q})$ are two error rate combinations for which \Cref{thm:nicedelta} implies a maximal communication rate of $r_0$ and $r_1$, respectively. For any $0 \leq \alpha \leq 1$ consider the following convex combinations of these quantities: $\gamma = \alpha\gamma_0+(1-\alpha)\gamma_1$, $\delta = \alpha\delta_0+(1-\alpha)\delta_1$, and $r = \alpha r_0 + (1-\alpha)r_1$.
It is true that any $(\delta,\gamma)$-list-decodable $q$-ary code \noITW{$C$} has a rate of at most $r$.
\end{theorem}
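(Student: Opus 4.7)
The plan is to extend the strategy in the proof of \Cref{thm:nicedelta} by applying its adversary separately to the two halves of each codeword. Specifically, letting $n_0 = \alpha n$ and $n_1 = (1-\alpha) n$, I would write each $x \in C$ as $x = x^{(0)} x^{(1)}$ with $|x^{(0)}| = n_0$, and consider the adversary that on the prefix $x^{(0)}$ deletes every occurrence of the $d_0$ least frequent symbols (using any leftover deletions at trailing positions) and then performs $\gamma_0 n_0$ insertions, and does the analogous thing on $x^{(1)}$ with $d_1$ and $\gamma_1 n_1$. The total budget used is $\delta_0 n_0 + \delta_1 n_1 = \delta n$ deletions and $\gamma_0 n_0 + \gamma_1 n_1 = \gamma n$ insertions, so this is a valid $(\gamma,\delta)$-adversary.

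Next I would partition the codewords of $C$ by the pair $(S_x^{(0)}, S_x^{(1)})$ consisting of the two deleted subsets. There are only $\binom{q}{d_0}\binom{q}{d_1} = O_q(1)$ classes, so the largest class $C_*$ satisfies $|C_*| \geq |C|/O_q(1)$. Within $C_*$, every codeword reduces (after the deletions) to a pair $\tilde{x} = (\tilde{x}^{(0)}, \tilde{x}^{(1)})$ where $\tilde{x}^{(0)}$ is a string of length $\tilde{n}_0 = n_0(1-\delta_0)$ over a fixed $(q-d_0)$-subset $\Sigma_0 \subseteq \Sigma$, and $\tilde{x}^{(1)}$ is of length $\tilde{n}_1 = n_1(1-\delta_1)$ over a fixed $(q-d_1)$-subset $\Sigma_1$. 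The residual adversarial action is $\gamma_0 n_0$ insertions into the reduced prefix and $\gamma_1 n_1$ into the reduced suffix, which I further restrict to draw only from $\Sigma_0$ and $\Sigma_1$ respectively (a weaker adversary that still fits inside the $(\gamma,\delta)$-budget and will be easier to analyze).

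The heart of the argument is a two-half generalization of the probabilistic argument underlying \Cref{thm:converse}. I would sample $W_0$ uniformly from $\Sigma_0^{\tilde{n}_0(1+\gamma_0')}$ and, independently, $W_1$ uniformly from $\Sigma_1^{\tilde{n}_1(1+\gamma_1')}$, where $\gamma_i' = \gamma_i/(1-\delta_i)$. For any $x \in C_*$, the event that $W = W_0 \cdot W_1$ is a valid output of our adversary on input $x$ factors as the conjunction of $\tilde{x}^{(0)}$ being a subsequence of $W_0$ and $\tilde{x}^{(1)}$ being a subsequence of $W_1$, and hence has probability $P_0 \cdot P_1$, where $P_0, P_1$ are the single-half subsequence-containment probabilities analyzed in \Cref{thm:converse}. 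Since $C$ is $(\gamma,\delta,L)$-list-decodable, at most $L$ codewords of $C_*$ can be compatible with any fixed $W$, so taking expectations yields $|C_*| \cdot P_0 \cdot P_1 \leq L$. Notably, this bounds $|C_*|$ directly, so the potential non-injectivity of the alphabet reduction causes no issue.

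Plugging in the \Cref{thm:converse} estimates for each half (with alphabet $q - d_i$, block length $\tilde{n}_i$, and insertion rate $\gamma_i'$) yields $\log_q |C| \leq \tilde{n}_0 R_0' \log_q(q - d_0) + \tilde{n}_1 R_1' \log_q(q - d_1) + o(n)$, where $R_i'$ denotes that theorem's rate bound. A direct comparison with the closed-form expression for $r_i$ in \Cref{thm:nicedelta} shows that $r_i = (1-\delta_i) R_i' \log_q(q - d_i)$, so dividing by $n$ and using $\tilde{n}_0/n = \alpha(1-\delta_0)$ together with $\tilde{n}_1/n = (1-\alpha)(1-\delta_1)$ gives rate $\leq \alpha r_0 + (1-\alpha) r_1 + o(1) = r + o(1)$, as desired. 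The main obstacle I anticipate is carefully formalizing the two-half probabilistic argument: one must verify that the two subsequence-containment events rigorously factor under the restricted alphabets, that the expectation-versus-list-size inequality correctly counts codewords even when many $x \in C_*$ share a common reduced image $\tilde{x}$, and that the $o(n)$ error terms from the single-half argument of \Cref{thm:converse} transfer cleanly to this setting.
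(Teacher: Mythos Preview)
Your proposal is correct and follows essentially the same approach as the paper: split each codeword into a prefix of length $\alpha n$ and a suffix of length $(1-\alpha)n$, apply alphabet reduction by the $d_0$ (resp.\ $d_1$) least frequent symbols on each half, bucket by the pair of surviving sub-alphabets, and then run the random-supersequence expectation argument on each half independently (lower-bounding $\Pr[\tilde x \text{ is a subsequence of } Z]$ by the product $\Pr[\tilde x^{(0)} \subseteq Z_0]\cdot\Pr[\tilde x^{(1)} \subseteq Z_1]$). The paper presents this jointly with \Cref{thm:outer bound}, first reducing to neighboring spokes $d_1=d_0+1$ via the convexity of $f(\gamma,\delta)$ and then carrying out the two-half computation explicitly rather than invoking \Cref{thm:converse} as a black box; your direct treatment of arbitrary $d_0,d_1$ and your identity $r_i=(1-\delta_i)\,R_i'\log_q(q-d_i)$ amount to the same thing.
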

See \Cref{fig:q=2} for an illustration of this bound for $q=5$. Red curves indicate the outer bound described above for the special values of $\delta$ of the form $\frac{d}{q}$ as given by \Cref{thm:nicedelta}.
%\cref{thm:outer bound,lem:optimal-gamma} explicitly characterize this bound.

\Cref{thm:mainconvexitystatement} together with \Cref{thm:nicedelta} gives a conceptually very clean description of our outer bound. However, an (exact) evaluation of the outer bound as given by \Cref{thm:mainconvexitystatement} is not straightforward since there are many convex combinations which all produce valid bound but how to compute or select the one which gives the strongest guarantee on the rate for a given $(\gamma,\delta)$ pair is not clear. This is particularly true since, as already mentioned above, the optimal points to combine do not lie on the same rate-plane. To remedy this, we give, as an alternative statement to  \Cref{thm:mainconvexitystatement}, the next theorem which produces an explicit outer bound for any $(\gamma, \delta)$ as an $\alpha$-convex combination of two points $(\gamma_0, \delta_0)$ and $(\gamma_1, \delta_1)$ only in dependence on the free parameter $\gamma_0$. We then show in \Cref{lem:optimal-gamma} an explicit expression for the optimal value for $\gamma_0$. Together, this produces a significantly less clean but on the other hand fully explicit description of our outer bound. 

%See \Cref{fig:q=2} for an illustration of this bound for $q=5$. Red curves indicate the outer bound described above for the special values of $\delta$ of form $\frac{d}{q}$. \cref{thm:outer bound,lem:optimal-gamma} explicitly characterize this bound. In \cref{thm:outer bound}, we derive a capacity outer bound for $(\gamma, \delta)$ where $\frac{d}{q} < \delta < \frac{d+1}{q}$ using the linear combination of known outer bounds at points $\left(\gamma_0, \frac{d}{q}\right)$ and $\left(\gamma_1, \frac{d+1}{q}\right)$ with the undetermined parameter $\gamma_0$. We then find the optimal value for $\gamma_0$ in \cref{lem:optimal-gamma}. 

%For the general case of $\delta$ not necessarily being an integer multiply of $\frac{1}{q}$, we use a time-sharing argument to show that the points obtained by the linear combination of outer bound in this special case serve as outer bounds for the rest of $(\gamma,\delta)$ pairs. 

\global\def\FigureNiceDeltas{
\noITW{\begin{sidewaysfigure}
    \centering
    \includegraphics[width=1\textwidth]{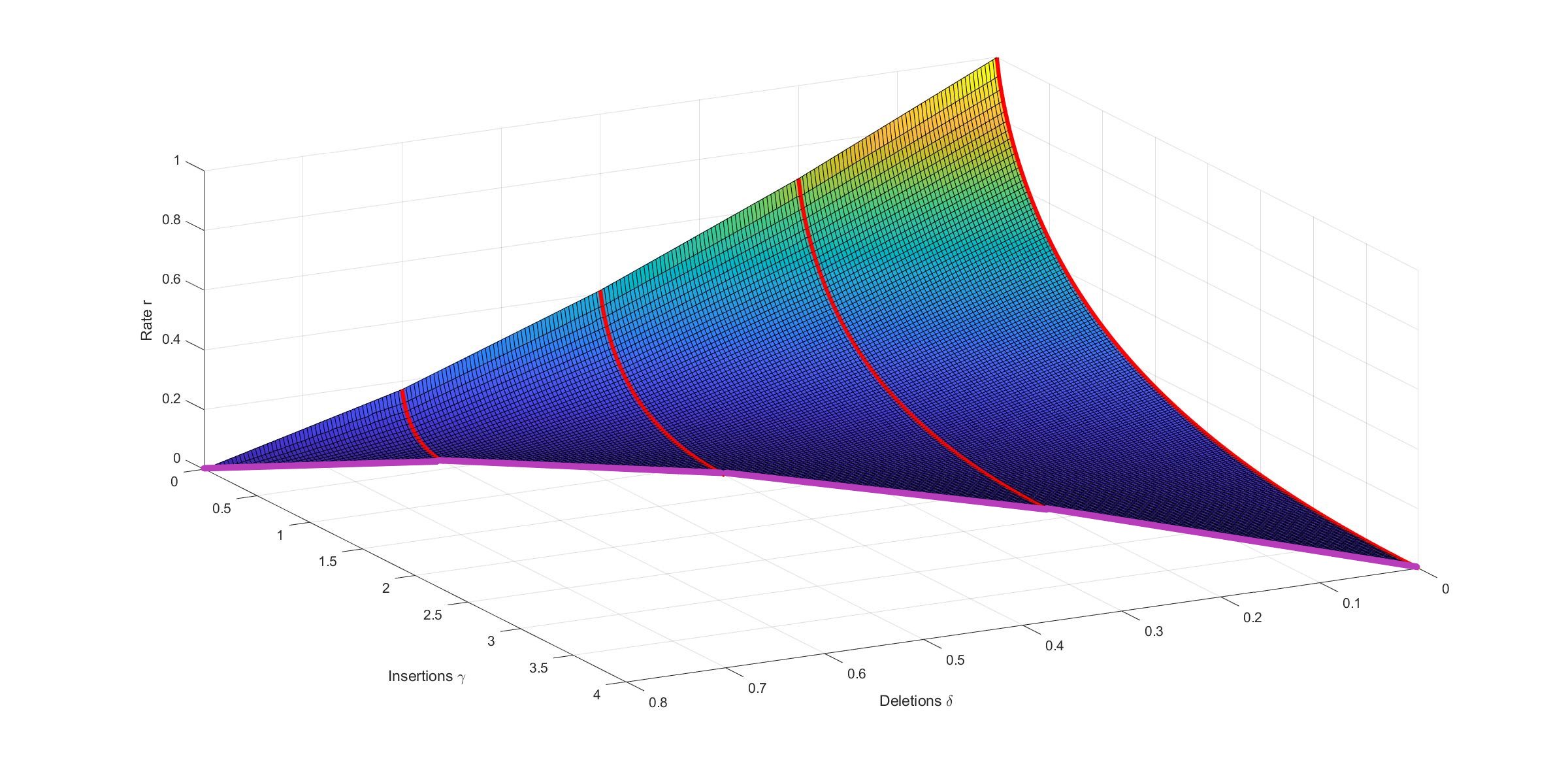}
    \caption{Outer bound for rate for $q=5$. The special case where $\delta=\frac{d}{q}$ for some integer $d$ is indicated with red lines.}
    \label{fig:q=2}
\end{sidewaysfigure}
}}
\ITWonly{
\begin{figure}
    \centering
    \includegraphics[width=1\linewidth]{}
    \caption{\footnotesize Outer bound for rate for $q=5$. The special case where $\delta=\frac{d}{q}$ for some integer $d$ is indicated with red lines.}
    \label{fig:q=2}
\end{figure}
}

\fullOnly{\FigureNiceDeltas}

% \begin{figure}
%     \centering
%     \includegraphics[width=1.2\textwidth]{}
%     \caption{Outer bound for rate for $q=5$.}
% %    \includegraphics[width=0.8\textwidth]{}
% %    \caption{Implied error resilience for $q=5$.}
%     \label{fig:q=2}
% \end{figure} 

\begin{theorem}\label{thm:outer bound}
Let $C$ be a $q$-ary insertion-deletion code that is list-decodable from $\gamma \in [0, q-1]$ fraction of insertions and $\delta \in [0, 1-\frac{1}{q}]$ fraction of deletions. Then, the rate of $C$ is no larger than
\noITW{\begin{eqnarray*}
&&    \alpha\left(1-\frac{d}{q}\right) \left(
(1+\gamma_0)\log_{q} \frac{q-d}{1+\gamma_0}
-
\gamma_0 \log_{q}\frac{q-d - 1}{\gamma_0}
\right)\allowdisplaybreaks\\
&+&
(1-\alpha)\left(1-\frac{d+1}{q}\right)
\left(
(1+\gamma_1)\log_{q} \frac{q-d-1}{1+\gamma_1}
-
\gamma_1 \log_{q}\frac{q-d - 2}{\gamma_1}
\right)
\end{eqnarray*}}
\ITWonly{
$\alpha\left(1-\frac{d}{q}\right) \left(
(1+\gamma_0)\log_{q} \frac{q-d}{1+\gamma_0}
-
\gamma_0 \log_{q}\frac{q-d - 1}{\gamma_0}
\right)+\\
(1-\alpha)\left(1-\frac{d+1}{q}\right) \left(
(1+\gamma_1)\log_{q} \frac{q-d-1}{1+\gamma_1}
-
\gamma_1 \log_{q}\frac{q-d - 2}{\gamma_1}
\right)$
% \begin{align}
% &    \alpha\left(1-\frac{d}{q}\right) \left(
% (1+\gamma_0)\log_{q} \frac{q-d}{1+\gamma_0}
% -
% \gamma_0 \log_{q}\frac{q-d - 1}{\gamma_0}
% \right)\nonumber\\
% +&
% (1-\alpha)\left(1-\frac{d+1}{q}\right)\nonumber\\
% &\left(
% (1+\gamma_1)\log_{q} \frac{q-d-1}{1+\gamma_1}
% -
% \gamma_1 \log_{q}\frac{q-d - 2}{\gamma_1}
% \right)\label{eqn:outerBoundSpokes}
% \end{align}
}
for $d =\lfloor \delta q \rfloor$, $\alpha = 1 - \delta q + d$, and all $\gamma_0, \gamma_1 \geq 0$ where
$\alpha(1-\frac{d}{q})\gamma_0
+
(1-\alpha)(1-\frac{d+1}{q}) \gamma_1
 = \gamma$. We present the optimal choice of $\gamma_0$ in \cref{lem:optimal-gamma}.
\end{theorem}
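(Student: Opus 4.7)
The plan is to view \Cref{thm:outer bound} as the concrete instantiation of the convexity principle of \Cref{thm:mainconvexitystatement} in which the two endpoints of the interpolation are chosen to be the two ``nice'' deletion rates bracketing $\delta$ from the two sides, namely $\delta_0 = d/q$ and $\delta_1 = (d+1)/q$ with $d=\lfloor \delta q\rfloor$. Everything else is a matter of keeping track of absolute versus relative insertion rates.

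First I would set $d=\lfloor \delta q\rfloor$ and $\alpha = 1-\delta q + d\in[0,1]$, and observe the easy identity $\alpha\cdot (d/q) + (1-\alpha)\cdot ((d+1)/q) = \delta$, so that $\delta$ is the $\alpha$-convex combination of $\delta_0$ and $\delta_1$. The next step is to unpack the meaning of the free parameters $\gamma_0,\gamma_1$ in the statement: comparing the expressions in \Cref{thm:outer bound} with the formula in \Cref{thm:nicedelta}, one sees that $\gamma_0$ and $\gamma_1$ play the role of $\gamma/(1-\delta)$ in \Cref{thm:nicedelta}, i.e. they are the insertion rates \emph{relative to the post-deletion block length}. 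Consequently the absolute insertion fractions at the two endpoints are $(1-d/q)\gamma_0$ and $(1-(d+1)/q)\gamma_1$, and the stated linear constraint $\alpha(1-d/q)\gamma_0+(1-\alpha)(1-(d+1)/q)\gamma_1=\gamma$ is exactly the statement that $\gamma$ is the $\alpha$-convex combination of those two absolute insertion fractions.

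With the two endpoints identified, I would then invoke \Cref{thm:nicedelta} separately at each of them. At the first endpoint $(\gamma_{\mathrm{abs}},\delta) = ((1-d/q)\gamma_0,\,d/q)$ the substitution $\gamma/(1-\delta)=\gamma_0$ turns the bound of \Cref{thm:nicedelta} into
\[
r_0 \;=\; \Bigl(1-\tfrac{d}{q}\Bigr)\!\left[(1+\gamma_0)\log_{q}\tfrac{q-d}{1+\gamma_0} \;-\; \gamma_0\log_{q}\tfrac{q-d-1}{\gamma_0}\right],
\]
and the analogous substitution with $d$ replaced by $d+1$ and $\gamma_0$ by $\gamma_1$ at the second endpoint yields the companion rate $r_1$. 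By construction both endpoints satisfy the hypothesis of \Cref{thm:mainconvexitystatement} (their deletion rates are integer multiples of $1/q$), so that theorem applies and produces the outer bound $r = \alpha r_0 + (1-\alpha) r_1$ at the convex combination $(\gamma,\delta)$. That sum is precisely the formula in the statement.

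The only potentially tricky step is the bookkeeping of \emph{relative} versus \emph{absolute} insertion rates: once one commits to $\gamma_0,\gamma_1$ being the post-deletion relative rates (as dictated by matching the formulas to those in \Cref{thm:nicedelta}), the affine constraint on $(\gamma_0,\gamma_1)$ is forced and everything collapses to a direct application of \Cref{thm:mainconvexitystatement}. Because a full degree of freedom remains in how $\gamma$ is split between the two endpoints, the bound is only optimized after minimizing over admissible $\gamma_0$, which is what \cref{lem:optimal-gamma} then supplies; within the proof of \Cref{thm:outer bound} itself, one simply leaves $\gamma_0$ (and hence $\gamma_1$) as a free parameter subject to the stated constraint.
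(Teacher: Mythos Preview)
Your reduction is logically sound but circular in the context of this paper. You invoke \Cref{thm:mainconvexitystatement} as a black box to derive \Cref{thm:outer bound}, but in the paper these two theorems share a single proof: the text explicitly notes that the two statements are ``merely a rephrasing of each other'' (with \Cref{thm:outer bound} being the neighboring-spokes special case), and then establishes both simultaneously by a direct argument. That direct argument is the actual content: the adversary splits each codeword into a prefix of length $n\alpha$ and a suffix of length $n(1-\alpha)$, performs alphabet reduction down to $q-d$ and $q-d-1$ symbols on the two parts respectively, restricts to the most frequent pair of sub-alphabets, and then runs a probabilistic argument---computing the expected number of reduced codewords that appear as subsequences of a random two-part string $Z=(Z_0,Z_1)$---to force the rate bound. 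None of this appears in your proposal.

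Put differently, you have correctly observed that \Cref{thm:outer bound} is the $d_0=d,\ d_1=d+1$ instantiation of \Cref{thm:mainconvexitystatement}, and you have correctly tracked the relative-vs-absolute insertion rates so that the formulas match. But that observation is exactly what the paper uses in the \emph{opposite} direction: it proves \Cref{thm:outer bound} from scratch and then obtains \Cref{thm:mainconvexitystatement} for free (via convexity of $f$) from the neighboring-spokes case. Since \Cref{thm:mainconvexitystatement} has no independent proof you can appeal to, your argument as written assumes what must be shown. To turn your outline into a proof you would need to supply the adversary strategy and the subsequence-counting calculation that the paper carries out.
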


%one can also consider outer bounds on $\delta = \frac{x}{q}$ for values of $x$ other than $d$ and $d+1$ to obtain tighter outer bounds in \cref{thm:outer bound}. However, it is straightforward to show that this will not make the outer bound of \cref{thm:outer bound} any stronger since for any three points $(\gamma_1, \delta_1=d_1/q)$, $(\gamma_2, \delta_2=d_2/q)$, and $(\gamma_3, \delta_3=d_3/q)$ with $d_1 < d_2 < d_3$ where $(\gamma_2, \delta_2) = \alpha (\gamma_1, \delta_1) + (1-\alpha) (\gamma_3, \delta_3)$, $c_2 < \alpha c_1 + (1-\alpha) c_3$ where $c_i$ is the outer bound set forth in \eqref{eqn:nice-delta-outer bound} for $(\gamma_i, \delta_i)$.

\begin{proof}[Proof of \cref{thm:outer bound,thm:mainconvexitystatement}]

We first note that the statements of \cref{thm:outer bound} and \cref{thm:mainconvexitystatement} are merely a rephrasing of each other with the exception that \cref{thm:outer bound} only allows and optimizes over convex combinations of neighboring spokes of \Cref{thm:nicedelta}, namely the ones for $d_0 = d$ and $d_1 = d+1$ for $d = \lfloor\delta q\rfloor$. This restriction, however, is without loss of generality. Indeed, for any values from the domain $\left\{(\gamma, \delta) \mid \delta = \frac{d}{q}, 0\leq d \leq q-1, d\in \mathbb{Z}\right\}$, \Cref{thm:nicedelta} gives values which come from the function $f(\gamma, \delta) = (1-\delta)\left[
\left(1 + \frac{\gamma}{1-\delta}\right)\log_q\frac{q(1-\delta)}{\frac{\gamma}{1-\delta}+1}
-\frac{\gamma}{1-\delta}\cdot\left(\log_{q}\frac{q(1-\delta)-1}{\frac{\gamma}{1-\delta}}\right)
\right]$. This function is convex. \noITW{(see \noITW{\Cref{appdx:proofOfConvexity}} for a formal proof.)}\ITWonly{(proof in \cite{FullVersion}.)} Any value given as a convex combination between two non-neighboring spokes can therefore be at least matched \noITW{(and indeed, thanks to the strict convexity of $f(\cdot,\cdot)$ always be improved)}\ITWonly{(and actually improved due to strict convexity)} by choosing a different convex combination between neighboring spokes. This justifies the ``restricted" formulation of \cref{thm:outer bound}, which helps in reducing the number of parameters and simplifies calculations. 

In order to prove \cref{thm:outer bound} we, again, fix a  specific strategy for the adversary's use of deletions. In particular, the adversary will use $n\alpha\frac{d}{q}$ deletions on the first $n\alpha$ symbols of the transmitted codeword to eliminate all instances of the $d$ least-frequent symbols there. Similarly, he removes all instances of the respective $d+1$ least frequent symbols from the last $n(1-\alpha)$ symbols of the codeword. The resulting string is one out of some $\Sigma_0 ^ {n\alpha(1-\frac{d}{q})} \times \Sigma_1 ^ {n(1-\alpha)(1-\frac{d+1}{q})}$ where $\Sigma_0, \Sigma_1 \subseteq [q]$, $q_0 = |\Sigma_0| = q - d$, $q_1 = |\Sigma_1| = q - d - 1$. \noITW{This deletion strategy fits within the budgeted number of deletions since $\delta = \alpha \frac{d}{q} + (1-\alpha)\frac{d + 1}{q}$ for $d = \lfloor \delta q \rfloor$ and $\alpha = 1 - \delta q + d$.}

Note that while the adversary can convert any codeword of $C$ to a string of such form, the sub-alphabets $\Sigma_0$ and $\Sigma_1$ will likely be different between different codewords of $C$. Let $(\Sigma_0, \Sigma_1)$ be the pair of the most frequently reduced to alphabets and let $C_0$ be the set of codewords of $C$ that, after undergoing the above-described procedure, turn into a string out of $\Sigma_0 ^ {n\alpha(1-\frac{d}{q})} \times \Sigma_1 ^ {n(1-\alpha)(1-\frac{d+1}{q})}$. Note that $\frac{|C|}{{q\choose d}{q\choose d + 1}}\leq |C_0| \leq |C|$. Further, let $D_0$ be the set of codewords in $C_0$ after undergoing the alphabet reduction procedure mentioned above. To give an outer bound of the rate of $C$ it thus suffices to bound from above the size of $C_0$--or equivalently, $D_0$; Since $C$ is ($L=\poly(n))$-list-decodable, no more than $L$ members of $C_0$ can be mapped to a single member of $D_0$\ITWonly{ thus $|D_0| \geq \frac{|C_0|}{\poly(n)}$.}\noITW{ during the above procedure, thus, $|D_0| \geq \frac{|C_0|}{\poly(n)}$ which results in a negligible $o(1)$ difference in the rate.}

We bound above the size of $D_0$ by showing that if $|D_0|$ is too large, there will be some received word that can be obtained by exponentially many words in $D_0$ after $n\gamma$ insertions. Similar to \cite{haeupler2018synchronization4}, we utilize the linearity of expectation to derive this. Let us pick a random string $Z = (Z_0, Z_1)$ that consists of $n\alpha(1-\frac{d}{q})(1 + \gamma_0)$ symbols chosen uniformly out of $\Sigma_0$ (referred to by $Z_0$) and 
$n(1-\alpha)(1-\frac{d+1}{q})(1+\gamma_1)$ symbols uniformly chosen out of $\Sigma_1$ (referred to by $Z_1$). We have that 
$ 
\alpha(1-\frac{d}{q})\gamma_0
+
(1-\alpha)(1-\frac{d+1}{q}) \gamma_1
 = \gamma $. ($\gamma_0$ and $\gamma_1$ will be determined later.) We calculate the expected number of the members of $D_0$ that are subsequences of such string -- denoted by $X$. In the following, we \noITW{will often }describe members of $D_0$ like $y$ as the concatenation $(y_0, y_1)$ where 
 $|y_0| = n_0 = n\alpha(1-\frac{d}{q})$
 and
 $|y_1| = n_1 = n(1-\alpha)(1-\frac{d+1}{q})$.
\noITW{
\begin{eqnarray}
\mathbb{E}[X]
&=& \sum_{y=(y_0, y_1)\in{D_0}}\Pr\{y\text{ is a subsequence of }Z\}\nonumber\\
 &\geq& \sum_{y=(y_0, y_1)\in{D_0}}\Pr\{y_0\text{ is a subsequence of }Z_0\} \cdot \Pr\{y_1\text{ is a subsequence of }Z_1\} \nonumber\\
&=& \sum_{y=(y_0, y_1)\in{D_0}}\prod_{j=0, 1} \Pr\{y_j\text{ is a subsequence of }Z_j\} \nonumber\allowdisplaybreaks\\
&=& \sum_{y=(y_0, y_1)\in{D_0}}\prod_{j=0, 1} \sum_{1\le a_1<a_2<\cdots <a_{n_j} \le     n_j(1+\gamma_j)} \frac{1}{|\Sigma_j|^{n_j}}\left(1-\frac{1}{|\Sigma_j|}\right)^{a_{n_j}-{n_j}}
\label{eqn:ExpectationComputationStep1}\\
&=& |D_0|\prod_{j=0, 1}\left(|\Sigma_j|-1\right)^{-n_j}\sum_{l=n_j}^{n_j(1+\gamma_j)} {l-1 \choose n_j - 1} \left(\frac{|\Sigma_j|-1}{|\Sigma_j|}\right)^l\nonumber\\
&\ge& |D_0|\prod_{j=0, 1}\left(|\Sigma_j|-1\right)^{-n_j} {n_j(1+\gamma_j)-1 \choose n_j-1} \left(\frac{|\Sigma_j|-1}{|\Sigma_j|}\right)^{n_j(1+\gamma_j)}\label{eqn:ExpectationComputationStep2}\\
&=& |D_0|\prod_{j=0, 1}\left(|\Sigma_j|-1\right)^{-n_j} \frac{1}{1+\gamma_j}{n_j(1+\gamma_j) \choose n_j} \left(\frac{|\Sigma_j|-1}{|\Sigma_j|}\right)^{n_j(1+\gamma_j)}\nonumber\\
&=& |D_0| \prod_{j=0, 1}(|\Sigma_j|-1)^{n_j\gamma_j}|\Sigma_j|^{-n_j(1+\gamma_j)}2^{n_j(1+\gamma_j)H\left(\frac{1}{1+\gamma_j}\right)+o(n)}\nonumber\\
&=& |D_0| \prod_{j=0, 1} q^{n_j\left(
\gamma_j \log_{q}(q_j-1)
-(1+\gamma_j)\log_q q_{j}
+(1+\gamma_j)\log_{q} (1+\gamma_j)
-\gamma_j\log_{q} \gamma_j
+o(1)\right)}\nonumber\\
&=& |D_0| \prod_{j=0, 1} q^{n_j\left(
\gamma_j \log_{q}\frac{q_j - 1}{\gamma_j}
-(1+\gamma_j)\log_{q} \frac{q_j}{1+\gamma_j}+o(1)\right)}\nonumber\\
&=& |D_0| q^{\sum_{j=0, 1} n_j\left(
\gamma_j \log_{q}\frac{q_j - 1}{\gamma_j}
-(1+\gamma_j)\log_{q} \frac{q_j}{1+\gamma_j}+o(1)\right)}%\nonumber\\
%&=&q^{n\eps+o(n)}
\label{eqn:ExpectationComputationStep3}
\end{eqnarray}
}
\ITWonly{

We have $\mathbb{E}[X]
= \sum_{y=(y_0, y_1)\in{D_0}}\Pr\{y\text{ is a subseq. of }Z\}.$
Note that 
$\Pr\{y\text{ is a subseq. of }Z\}$ is not smaller than $\prod_{j=0, 1} \Pr\{y_j\text{ is a subseq. of }Z_j\}
$. Also, conditioning on the leftmost occurrence of $Z_j$ in $y_j$, we can write down the $\Pr\{y_j\text{ is a subseq. of }Z_j\}$ as 
$\sum_{1\le a_1<\cdots <a_{n_j} \le     n_j(1+\gamma_j)} \frac{1}{|\Sigma_j|^{n_j}}\left(1-\frac{1}{|\Sigma_j|}\right)^{a_{n_j}-{n_j}}$. We use this expression in \cite{FullVersion} to bound below $\mathbb{E}[X]$ by
$L=|D_0| q^{\sum_{j=0, 1} n_j\left(
\gamma_j \log_{q}\frac{q_j - 1}{\gamma_j}
-(1+\gamma_j)\log_{q} \frac{q_j}{1+\gamma_j}+o(1)\right)}$.
% which is equal to
% $\left(|\Sigma_j|-1\right)^{-n_j}\sum_{l=n_j}^{n_j(1+\gamma_j)} {l-1 \choose n_j - 1} \left(\frac{|\Sigma_j|-1}{|\Sigma_j|}\right)^l$.
% Thus, we can bound below $\mathbb{E}[X]$ by
% {\small
% \begin{align}
% &|D_0|\prod_{j=0, 1}\left(|\Sigma_j|-1\right)^{-n_j} {n_j(1+\gamma_j)-1 \choose n_j-1} \left(\frac{|\Sigma_j|-1}{|\Sigma_j|}\right)^{n_j(1+\gamma_j)}\nonumber\\
% &= |D_0| q^{\sum_{j=0, 1} n_j\left(
% \gamma_j \log_{q}\frac{q_j - 1}{\gamma_j}
% -(1+\gamma_j)\log_{q} \frac{q_j}{1+\gamma_j}+o(1)\right)}
% \label{eqn:ExpectationComputationStep3}
% \end{align}
% }
% A more detailed version of this calculation is available in \cite{FullVersion}.
}
\noITW{
Step~\eqref{eqn:ExpectationComputationStep1} is obtained by conditioning the probability of $y_j$ being a subsequence of $Z_j$ over the leftmost occurrence of $y_j$ in $Z_j$ indicated by $a_1, a_2, \cdots, a_n$ as indices of $Z_j$ where the leftmost occurrence of $y_j$ is located. In that event, $Z_j[a_i]=y_j[i]$ and $y_j[i]$ cannot appear in $Z_j[a_{i-1}+1, a_{i}-1]$. Therefore, the probability of this event is $\left(\frac{1}{q_j}\right)^{n_j}\left(1-\frac{1}{q_j}\right)^{a_n-n_j}$. 
To verify Step~\eqref{eqn:ExpectationComputationStep2}, note that we substituted a summation of positive values with one single term among them. %we show that the summation in previous step takes its largest value when $l=n_j(1+\gamma_j)$ and bound the summation below by that term. To see that ${l \choose n_j} \left(\frac{q_j-1}{q_j}\right)^l$ is maximized for $l=n_j(1+\gamma_j)$ in $n_j \le l \le n_j(1+\gamma_j)$, it suffices to show that the ratio of consecutive terms is larger than one for $l \le n_j(1+\gamma_j)$:
%$$\frac{{l \choose n_j} \left(\frac{q_j-1}{q_j}\right)^{l}}{{l-1 \choose n_j} \left(\frac{q_j-1}{q_j}\right)^{l-1}} = \frac{l}{l-n_j} \cdot\frac{q_j-1}{q_j}=\frac{1-\frac{1}{q_j}}{1-\frac{n_j}{l}} \ge 1$$
%The last inequality follows from the fact that $l\le n_j(\gamma_j+1)\le n_j q_j \Rightarrow \frac{1}{q_j} < \frac{n_j}{l}$.

Finally, by \eqref{eqn:ExpectationComputationStep3}, there exists some realization of $Z$ to which at least 
$$|D_0| q^{\sum_{j=0, 1} n_j\left(
\gamma_j \log_{q}\frac{q_j - 1}{\gamma_j}
-(1+\gamma_j)\log_{q} \frac{q_j}{1+\gamma_j}+o(1)\right)}$$}
\ITWonly{This means that there exists some realization of $Z$ to which at least $L$}
codewords of $\mathcal{C}$ are subsequences. In order for $C$ to be list-decodable, this quantity needs to be sub-exponential. Therefore, 
\ITWonly{$ r_C = \frac{\log_q |D_0| + O(1)}{n} \leq 
\sum_{j=0, 1} \frac{n_j}{n}\left(
(1+\gamma_j)\log_{q} \frac{q_j}{1+\gamma_j}
-
\gamma_j \log_{q}\frac{q_j - 1}{\gamma_j}
\right)$. This leads to the upper bound stated in \Cref{thm:outer bound} for $r_C$. (see the full version~\cite{FullVersion} for a complete calculation and proof.)}
\noITW{
\begin{eqnarray}
 r_C &=& \frac{\log_q |D_0| + O(1)}{n} \leq 
\sum_{j=0, 1} \frac{n_j}{n}\left(
(1+\gamma_j)\log_{q} \frac{q_j}{1+\gamma_j}
-
\gamma_j \log_{q}\frac{q_j - 1}{\gamma_j}
\right)\nonumber\\
% \fullOnly{
&=&
\alpha\left(1-\frac{d}{q}\right) \left(
(1+\gamma_0)\log_{q} \frac{q_0}{1+\gamma_0}
-
\gamma_0 \log_{q}\frac{q_0 - 1}{\gamma_0}
\right)\nonumber\\
&&+
(1-\alpha)\left(1-\frac{d+1}{q}\right)\left(
(1+\gamma_1)\log_{q} \frac{q_1}{1+\gamma_1}
-
\gamma_1 \log_{q}\frac{q_1 - 1}{\gamma_1}
\right)\nonumber\\
% }
&=&
\alpha\left(1-\frac{d}{q}\right) \left(
(1+\gamma_0)\log_{q} \frac{q-d}{1+\gamma_0}
-
\gamma_0 \log_{q}\frac{q-d - 1}{\gamma_0}
\right)\nonumber\\
&&+
(1-\alpha)\left(1-\frac{d+1}{q}\right)\left(
(1+\gamma_1)\log_{q} \frac{q-d-1}{1+\gamma_1}
-
\gamma_1 \log_{q}\frac{q-d - 2}{\gamma_1}
\right)\label{eqn:parametrized-bound}
\end{eqnarray}

Note that \eqref{eqn:parametrized-bound} is an outer bound for the rate for all choices of $\gamma_0, \gamma_1 \geq 0$ where 
$\alpha(1-\frac{d}{q})\gamma_0
+
(1-\alpha)(1-\frac{d+1}{q}) \gamma_1
 = \gamma.$}
% \begin{figure}
%     \centering
%     \includegraphics[width=0.8\textwidth]{}
%     \caption{Outer bound for rate for $q=2$.}
%     \includegraphics[width=0.8\textwidth]{}
%     \caption{Implied error resilience for $q=2$.}
%     \label{fig:q=2}
% \end{figure} 
\end{proof}
\ITWonly{\vspace{-4mm}}
\begin{theorem}\label{lem:optimal-gamma}
The optimal choice for \noITW{the value of }$\gamma_0$\noITW{\footnote{Note that choosing $\gamma_0$ also determines a value for $\gamma_1$.}} in \cref{thm:outer bound} \noITW{(i.e., the one that yields the tightest bound)} satisfies
\noITW{
$$\left(1 + \frac{1}{\gamma_1}\right)\left(1-\frac{1}{q-d-1}\right)= \left(1 + \frac{1}{\gamma_0}\right)\left(1-\frac{1}{q-d}\right).$$
\smallskip
}
\ITWonly{
$\left(1 + 1/\gamma_1\right)\left(1-{1}/(q-d-1)\right)= \left(1 + 1/{\gamma_0}\right)\left(1-1/(q-d)\right).$
}
\noindent Together with the equation $\alpha(1-\frac{d}{q})\gamma_0 + (1-\alpha)(1-\frac{d+1}{q}) \gamma_1  = \gamma $, this gives \noITW{the following explicit expression for $\gamma_0$ in terms of $q$,$\gamma$, $d =\lfloor \delta q \rfloor$, and $\alpha = 1 - \delta q + d$:
$$\gamma_0=\frac{1}{2\alpha (q-d)} \cdot \left(A - \sqrt{B^2 + C}\right)$$
}
\ITWonly{an explicit expression for $\gamma_0$ in terms of $q$,$\gamma$, $d =\lfloor \delta q \rfloor$, and $\alpha = 1 - \delta q + d$ which can be found in the full version \cite{FullVersion}.
}
% \ITWonly{\footnotesize
% \begin{eqnarray*}
% A &=& 3 \alpha  d^2 q+d^2 q-3 \alpha  d q^2-2 d q^2+4 \alpha  d q+2 d q+\alpha  q^3-2 \alpha  q^2\\
% &&+q^3-2 q^2+\gamma  q+q -\alpha  d^3-2 \alpha  d^2\\
% B&=& \alpha  d^3+2 \alpha  d^2-3 \alpha  d^2 q-d^2 q+3 \alpha  d q^2+2 d q^2-4 \alpha  d q-2 d q\\
% &&-\alpha  q^3+2 \alpha  q^2-q^3+2 q^2-\gamma  q-q\\
% C &=& 4 (\alpha  q-\alpha  d) \left(\gamma  d^2 q-2 \gamma  d q^2+2 \gamma  d q+\gamma  q^3-2 \gamma  q^2\right)
% \end{eqnarray*}
% }
\noITW{for 
\smallskip
$A = 3 \alpha  d^2 q+d^2 q-3 \alpha  d q^2-2 d q^2+4 \alpha  d q+2 d q+\alpha  q^3-2 \alpha  q^2+q^3-2 q^2+\gamma  q+q -\alpha  d^3-2 \alpha  d^2$

\smallskip

$B= \alpha  d^3+2 \alpha  d^2-3 \alpha  d^2 q-d^2 q+3 \alpha  d q^2+2 d q^2-4 \alpha  d q-2 d q-\alpha  q^3+2 \alpha  q^2-q^3+2 q^2-\gamma  q-q$

\smallskip

$C = 4 (\alpha  q-\alpha  d) \left(\gamma  d^2 q-2 \gamma  d q^2+2 \gamma  d q+\gamma  q^3-2 \gamma  q^2\right)$}
\end{theorem}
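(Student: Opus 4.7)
The plan is to view \Cref{thm:outer bound} as furnishing a one-parameter family of valid upper bounds on the rate, parameterized by $\gamma_0$ (with $\gamma_1$ fixed by the linear constraint $\alpha(1-d/q)\gamma_0 + (1-\alpha)(1-(d+1)/q)\gamma_1 = \gamma$), and then to choose the $\gamma_0$ that minimizes this bound. To set notation, I would write the bound as $f(\gamma_0) = A_0\, g_0(\gamma_0) + A_1\, g_1(\gamma_1)$ where $A_0 = \alpha(1-d/q)$, $A_1 = (1-\alpha)(1-(d+1)/q)$, $Q_0 = q-d$, $Q_1 = q-d-1$, and $g_j(t) = (1+t)\log_q \tfrac{Q_j}{1+t} - t\log_q \tfrac{Q_j-1}{t}$.

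A direct differentiation gives $g_j'(t) = \log_q \tfrac{Q_j\, t}{(1+t)(Q_j-1)}$ and $g_j''(t) = \tfrac{1}{t(1+t)\ln q} > 0$, so each $g_j$ is strictly convex. Since $\gamma_1$ is an affine function of $\gamma_0$, so is $f$, and moreover $g_j'(t) \to -\infty$ as $t \to 0^+$, which forces $f'(\gamma_0) \to -\infty$ at $\gamma_0 \to 0^+$ and $f'(\gamma_0) \to +\infty$ as $\gamma_0 \to (\gamma/A_0)^-$. Thus $f$ admits a unique interior minimizer characterized by the first-order condition $g_0'(\gamma_0) = g_1'(\gamma_1)$; using $\tfrac{d\gamma_1}{d\gamma_0} = -A_0/A_1$, this amounts to equating the two logarithms and rearranging, producing
\[
\left(1 + \tfrac{1}{\gamma_1}\right)\left(1 - \tfrac{1}{q-d-1}\right) \;=\; \left(1 + \tfrac{1}{\gamma_0}\right)\left(1 - \tfrac{1}{q-d}\right),
\]
which is the first assertion of the theorem.

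To extract the closed form I would solve this optimality equation for $\gamma_1$ in terms of $\gamma_0$: setting $m := q-d$ and using the identity $(m-1)^2 - m(m-2) = 1$, a one-line manipulation yields $\gamma_1 = \tfrac{m(m-2)\,\gamma_0}{\gamma_0 + (m-1)^2}$. Substituting this into the constraint $A_0\gamma_0 + A_1\gamma_1 = \gamma$ and clearing the denominator produces the quadratic
\[
A_0\gamma_0^2 \;+\; \bigl(A_0(m-1)^2 + A_1 m(m-2) - \gamma\bigr)\gamma_0 \;-\; \gamma(m-1)^2 \;=\; 0.
\]
Its product of roots is $-\gamma(m-1)^2/A_0 < 0$, so exactly one root is positive; that is the admissible $\gamma_0$. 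Multiplying numerator and denominator of the resulting quadratic-formula expression by $q$ (to absorb the $1/q$ inside $A_0, A_1$) and applying $q\bigl[A_0(m-1)^2 + A_1 m(m-2)\bigr] = (q-d)(q-d-1)(q-d-2+\alpha)$, which follows from $\alpha(q-d-1)+(1-\alpha)(q-d-2) = q-d-2+\alpha$, turns every coefficient into a polynomial in $q, d, \alpha, \gamma$.

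The only real obstacle is the bookkeeping in this last step: expanding the cubic $(q-d)(q-d-1)(q-d-2+\alpha)$ and regrouping signs so that the quadratic-formula discriminant matches the stated polynomials $A$, $B$, and $C$, and confirming that the chosen sign in front of the square root picks out the positive root of the quadratic. Conceptually everything else is the single-variable convex optimization laid out above.
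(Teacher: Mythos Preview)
Your argument is correct and reaches the same optimality condition as the paper, but the route is somewhat different and in some respects cleaner. The paper does not differentiate the closed-form bound of \cref{thm:outer bound}; instead it goes back to the intermediate binomial expression \eqref{eqn:ExpectationComputationStep2} inside the proof of that theorem and compares the value at the integer split $(n_0\gamma_0, n_1\gamma_1)$ against $(n_0\gamma_0+1, n_1\gamma_1-1)$, finding where the discrete ratio crosses $1$ and then letting $n\to\infty$. Your direct calculus on $f(\gamma_0)=A_0 g_0(\gamma_0)+A_1 g_1(\gamma_1(\gamma_0))$ is more self-contained: it uses only the statement of \cref{thm:outer bound}, and your strict-convexity observation ($g_j''(t)=1/(t(1+t)\ln q)>0$ together with the affine dependence of $\gamma_1$ on $\gamma_0$) gives existence and uniqueness of the interior minimizer, something the paper does not explicitly verify. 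For the closed form, the paper simply reports that the two equations were solved ``analytically using computer software''; your derivation via $\gamma_1 = m(m-2)\gamma_0/(\gamma_0+(m-1)^2)$, the resulting quadratic in $\gamma_0$, and the sign-of-product-of-roots check that singles out the positive root is genuinely more informative, and reduces what remains to mechanical expansion. One small slip to fix: after establishing that each $g_j$ is strictly convex you write ``since $\gamma_1$ is an affine function of $\gamma_0$, so is $f$'' --- you mean that $f$ is strictly convex (as a positive combination of strictly convex functions composed with affine maps), not that $f$ is affine; the surrounding argument makes clear this is only a wording error.
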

%\shortOnly{The proof of \cref{lem:optimal-gamma} is available in \cref{app:optimal-gamma}.}
\global\def\ProofOfOptimalGammaTheorem{
\begin{proof}
To find the optimal value for $\gamma_0$, we find the choice of $\gamma_0$ that minimizes \eqref{eqn:ExpectationComputationStep1}. To this end, we calculate the ratio between the values of \eqref{eqn:ExpectationComputationStep1} when $n\gamma$ insertions are distributed between two parts as $(n_0\gamma_0, n_1\gamma_1)$ and when distributed as $(n_0\gamma_0 + 1, n_1\gamma_1 - 1)$. 
We show that this ratio monotonically increases as one increases $\gamma_0$. Therefore, to find the value of $\gamma_0$ for which  \eqref{eqn:ExpectationComputationStep1} is minimized, one only needs to find the choice of $\gamma_0$ for which this ratio is equal to one. 
\begin{eqnarray*}
%\frac{}{}
&&\frac{
{n_0(1+\gamma_0) \choose n_0} \left(\frac{|\Sigma_0|-1}{|\Sigma_0|}\right)^{n_0(1+\gamma_0)}
{n_1(1+\gamma_1) \choose n_1} \left(\frac{|\Sigma_1|-1}{|\Sigma_1|}\right)^{n_1(1+\gamma_1)}
}{
{n_0(1+\gamma_0)+1 \choose n_0} \left(\frac{|\Sigma_0|-1}{|\Sigma_0|}\right)^{n_0(1+\gamma_0)+1}
{n_1(1+\gamma_1)-1 \choose n_1} \left(\frac{|\Sigma_1|-1}{|\Sigma_1|}\right)^{n_1(1+\gamma_1)-1}
}\allowdisplaybreaks\\
&=&\frac{
{n_0(1+\gamma_0) \choose n_0} 
{n_1(1+\gamma_1) \choose n_1} \left(\frac{|\Sigma_1|-1}{|\Sigma_1|}\right)
}{
{n_0(1+\gamma_0)+1 \choose n_0} \left(\frac{|\Sigma_0|-1}{|\Sigma_0|}\right)
{n_1(1+\gamma_1)-1 \choose n_1}
}\\
&=&\frac{
{n_0(1+\gamma_0) \choose n_0} 
{n_1(1+\gamma_1) \choose n_1} 
}{
{n_0(1+\gamma_0)+1 \choose n_0} 
{n_1(1+\gamma_1)-1 \choose n_1}
}
\times \frac{|\Sigma_0| \times (|\Sigma_1|-1)}{(|\Sigma_0|-1) \times |\Sigma_1|}
\\
&=&\frac{
\frac{n_1(1+\gamma_1)}{n_1(1+\gamma_1) - n_1} 
}{
\frac{n_0(1+\gamma_0)+1} {n_0(1+\gamma_0)+1 - n_0}
}
\times \frac{|\Sigma_0| \times (|\Sigma_1|-1)}{(|\Sigma_0|-1) \times |\Sigma_1|}
\\
&=&\frac{
\frac{n_1(1+\gamma_1)}{n_1\gamma_1} 
}{
\frac{n_0(1+\gamma_0)+1} {n_0\gamma_0+1}
}
\times \frac{|\Sigma_0| \times (|\Sigma_1|-1)}{(|\Sigma_0|-1) \times |\Sigma_1|}
\\
&=&\frac{
1 + \frac{1}{\gamma_1}
}{
1+\frac{1}{\gamma_0+1/n_0}
}
\times 
\frac{1-\frac{1}{|\Sigma_1|}}{1-\frac{1}{|\Sigma_0|}}
\end{eqnarray*}
Note that as one increases $\gamma_0$, the numerator grows and the denominator becomes smaller--meaning that the overall value goes up. 
Therefore, the optimal choice of $\gamma_0$ would be one for which:
\begin{eqnarray*}
&& \frac{1 + \frac{1}{\gamma_1}
}{
1+\frac{1}{\gamma_0+1/n_0}
}
\times 
\frac{1-\frac{1}{|\Sigma_1|}}{1-\frac{1}{|\Sigma_0|}} = 1\\
&\Leftrightarrow&
\left(1 + \frac{1}{\gamma_1}\right)\left(1-\frac{1}{|\Sigma_1|}\right)=
\left(1+\frac{1}{\gamma_0+1/n_0}\right)\left(1-\frac{1}{|\Sigma_0|}\right)
\end{eqnarray*}
Given that families of codes with increasing block lengths $n$ are considered, the term $\frac{1}{n_0} = \frac{1}{n\cdot \alpha(1-d/q)}$ vanishes. Thus, we are looking for a choice of $\gamma_0$ that satisfies
$$
\left(1 + \frac{1}{\gamma_1}\right)\left(1-\frac{1}{|\Sigma_1|}\right)=
\left(1 + \frac{1}{\gamma_0}\right)\left(1-\frac{1}{|\Sigma_0|}\right).$$

Putting this together with the equation $\alpha(1-\frac{d}{q})\gamma_0 + (1-\alpha)(1-\frac{d+1}{q}) \gamma_1  = \gamma $ from \cref{thm:outer bound} and solving the resulting system of equations analytically using computer software, the stated equation for the optimal choice of $\gamma_0$ is derived.
\end{proof}
}
%\ITWonly{\vspace{-2mm}Proof can be found in the full version~\cite{FullVersion}.}
\noITW{\ProofOfOptimalGammaTheorem}

% \fullOnly{\newnpage}
\section{Inner Bound via Analyzing Random Codes}\label{sec:inner_bound}
In this section, we provide an inner bound on the highest rate achievable by list-decodable insertion-deletion codes. \noITW{We start with a preliminary lemma in the following.} Throughout this section, we define $\mathcal{B}_i(S, n_i)$ or the \emph{insertion sphere of radius $n_i$} as the set of all strings that can be obtained by $n_i$ insertions from $S$. $\mathcal{B}_d(S, n_d)$ and $\mathcal{B}(S, n_i, n_d)$ are similarly defined\noITW{ as the \emph{deletion sphere of radius $n_d$} and the \emph{insertion-deletion sphere of insertion radius $n_i$ and deletion radius $n_d$} around $S$}\ITWonly{ for deletions and combination of insertions and deletions}.
\begin{lemma}[From \cite{levenshtein1974elements}]\label{lem:insertion-ball}
Let $n, n_i$, and $q$ be positive integers and $S\in[q]^{n}$. Then, \noITW{the size of the insertion sphere of radius $n_i$ around $S$ is
$$|\mathcal{B}_i(S, n_i)| = \sum_{i=0}^{n_i}{n+n_i \choose i}(q-1)^i.$$}
\ITWonly{$|\mathcal{B}_i(S, n_i)| = \sum_{i=0}^{n_i}{n+n_i \choose i}(q-1)^i.$}
\end{lemma}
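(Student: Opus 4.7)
The plan is to enumerate the members of $\mathcal{B}_i(S,n_i)$ (viewed as the supersequences of $S$ of length $n+n_i$) via a canonical leftmost embedding of $S$ inside each such supersequence, and then simplify the resulting expression into the closed form claimed by the lemma using a Vandermonde-type binomial identity.

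First, I would associate to each supersequence $T\in[q]^{n+n_i}$ of $S$ its unique \emph{leftmost embedding}: the tuple of positions $1\le a_1<a_2<\cdots<a_n\le n+n_i$ where $a_j$ is the smallest index greater than $a_{j-1}$ (with the convention $a_0:=0$) at which $T[a_j]=S[j]$. By the minimality of each $a_j$, one has $T[\ell]\ne S[j]$ for every $\ell$ with $a_{j-1}<\ell<a_j$, while positions $\ell>a_n$ are entirely unconstrained. Conversely, every choice of positions $a_1<\cdots<a_n$ together with any letter assignment to the non-embedded positions that obeys these forbidden-symbol constraints determines a distinct supersequence, so the leftmost embedding yields a bijection between supersequences of length $n+n_i$ and such constrained configurations.

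Second, for each $m\in\{n,n+1,\dots,n+n_i\}$ I would count the configurations with $a_n=m$: the positions $a_1,\dots,a_{n-1}$ form an arbitrary $(n-1)$-subset of $\{1,\dots,m-1\}$, contributing $\binom{m-1}{n-1}$; each of the $m-n$ interior non-embedded positions has $q-1$ admissible letters (it must avoid the next embedded symbol); and each of the $n+n_i-m$ trailing positions has $q$ admissible letters. Substituting $i:=m-n$ and summing gives the intermediate expression
\[
|\mathcal{B}_i(S,n_i)|=\sum_{i=0}^{n_i}\binom{n+i-1}{n-1}(q-1)^i\, q^{\,n_i-i}.
\]

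Third, I would match this to the claimed formula. Expanding $q^{n_i-i}=\bigl((q-1)+1\bigr)^{n_i-i}$ via the binomial theorem, re-indexing by $j:=i+\ell$, and applying the identity
\[
\sum_{i=0}^{j}\binom{n+i-1}{i}\binom{n_i-i}{j-i}=\binom{n+n_i}{j},
\]
which itself admits a short combinatorial proof by conditioning on the location of the $n$-th zero in a binary string of length $n+n_i$ with exactly $j$ ones, the expression collapses to $\sum_{j=0}^{n_i}\binom{n+n_i}{j}(q-1)^j$, matching the lemma. The main delicate point is the bijective first step: one must verify that the ``leftmost'' discipline is captured \emph{exactly} by the forbidden-symbol condition on interior non-embedded positions, so that the counting is both exhaustive and free of double counting; once this is in place, the remaining two steps are routine binomial manipulations.
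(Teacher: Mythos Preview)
The paper does not supply its own proof of this lemma: it is quoted verbatim as a classical result of Levenshtein (the header reads ``From \cite{levenshtein1974elements}'') and is used as a black box in the proof of \cref{lem:insdel-ball}. So there is no in-paper argument to compare against.

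Your argument is correct. The leftmost-embedding bijection in your first step is exactly the standard device for showing that the number of length-$(n+n_i)$ supersequences of a length-$n$ string is independent of the string itself, and your verification of the ``forbidden next symbol'' characterization of the non-embedded interior positions is the key point that makes the count exact. Your second step correctly yields the intermediate form $\sum_{i=0}^{n_i}\binom{n+i-1}{n-1}(q-1)^i q^{n_i-i}$. For the third step, your Vandermonde-type identity $\sum_{i=0}^{j}\binom{n+i-1}{i}\binom{n_i-i}{j-i}=\binom{n+n_i}{j}$ and its combinatorial justification (condition on the position of the $n$-th zero among length-$(n+n_i)$ binary strings with $j$ ones; this position is $n+i$, splitting the string into a prefix of length $n+i-1$ with $i$ ones and a suffix of length $n_i-i$ with $j-i$ ones) are both valid, and the re-indexing $j=i+\ell$ after expanding $q^{n_i-i}=\big((q-1)+1\big)^{n_i-i}$ goes through cleanly because $j\le n_i$ forces $i\le j$ to be the only binding constraint on the inner sum.

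One minor remark: the lemma as printed has $S\in[q]^{n_i}$, but the formula and its use in \cref{lem:insdel-ball} make clear that $S$ is meant to have length $n$ and the ball consists of supersequences of length exactly $n+n_i$; you have (correctly) read it this way.
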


% \begin{lemma}[From \cite{hirschberg1999bounds}]
% Let $n_d \leq n$ be positive integers and $S\in[q]^{n}$. Then, the number of strings that lie inside the deletion ball of radius $n_d$ around $S$ is bounded as follows.
% $$\sum_{i=0}^{n_d}{\phi(S)-n_d \choose i} \leq |\mathcal{B}_d(S, n_d)| \leq {\phi(S)+n_d-1\choose n_d}$$
% where $\phi(S)$ indicates the number of runs (i.e., maximal stretches of identical symbols) in $S$.
% \end{lemma}

\noITW{In the following, we give a simple bound on the size of the insertion-deletion sphere.}
\begin{lemma}\label{lem:insdel-ball}
Let $x\in[q]^n$, $\delta \in \left[0, 1-\frac{1}{q}\right]$ and $\gamma \in [0, (q-1)(1-\delta)]$. The size of the insertion-deletion sphere of insertion-radius $\gamma n$ and deletion-radius $\delta n$ around $x$ is no larger than
\noITW{$$|\mathcal{B}(x, \gamma n, \delta n)| \leq q^{n\left(H_q(\delta)
+ (1-\delta+\gamma)H_q\left(\frac{\gamma}{1-\delta+\gamma}\right) 
- \delta\log_q(q-1) 
\right)+o(n)}$$}
\ITWonly{$q^{n\left(H_q(\delta)
+ (1-\delta+\gamma)H_q\left(\frac{\gamma}{1-\delta+\gamma}\right) 
- \delta\log_q(q-1) 
\right)+o(n)}$}
where $H_q(\cdot)$ denotes the $q$-ary entropy function\noITW{ (see definition in \eqref{eqn:entropy})}\ITWonly{ defined as
$H_q(x) = x\log_q(q-1) - x\log_q x - (1-x) \log_q(1-x)$
}.
\end{lemma}
\ITWonly{
\begin{proof}[Proof Sketch (Full proof available in \cite{FullVersion})]
{\footnotesize
\begin{align}
&|\mathcal{B}(x, \gamma n, \delta n)| \leq \sum_{x_0 \in \mathcal{B}_d(x, \delta n)} |\mathcal{B}_i(x_0, \gamma n)|\nonumber\\
&\leq  {n \choose \delta n} \sum_{i=0}^{\gamma n}{n(1-\delta+\gamma) \choose i}(q-1)^i
\label{eqn:ball1}\\
&\leq n\gamma {n \choose n\delta} {n(1-\delta+\gamma) \choose n\gamma} (q-1)^{\gamma n}\label{eqn:ball2}\allowdisplaybreaks\\
&=q^{n\left(H_q(\delta) 
+ (1-\delta+\gamma)H_q\left(\frac{\gamma}{1-\delta+\gamma}\right) 
- \delta\log_q(q-1) 
\right)+o(n)}\nonumber
\end{align} }
Note that \eqref{eqn:ball1} follows from \cref{lem:insertion-ball} and \eqref{eqn:ball2} is true because the term in summation reaches its maximum when $i=n\gamma$.
\end{proof}
}
\noITW{
\begin{proof}
\begin{eqnarray}
|\mathcal{B}(x, \gamma n, \delta n)| &\leq& \sum_{x_0 \in \mathcal{B}_d(x, \delta n)} |\mathcal{B}_i(x_0, \gamma n)|\nonumber\\
&\leq&  {n \choose \delta n} \sum_{i=0}^{\gamma n}{n(1-\delta+\gamma) \choose i}(q-1)^i
\label{eqn:ball1}\\
&\leq& n\gamma {n \choose n\delta} {n(1-\delta+\gamma) \choose n\gamma} (q-1)^{\gamma n}\label{eqn:ball2}\\
&=&q^{n\left(H_q(\delta) - \delta\log_q(q-1)
+ (1-\delta+\gamma)H_q\left(\frac{\gamma}{1-\delta+\gamma}\right) - \gamma\log_q(q-1) + \gamma\log_q(q-1)
\right)+o(n)}\label{eqn:ball3}\\
&=&q^{n\left(H_q(\delta) 
+ (1-\delta+\gamma)H_q\left(\frac{\gamma}{1-\delta+\gamma}\right) 
- \delta\log_q(q-1) 
\right)+o(n)}\nonumber
%&\leq&  {\phi(x)+\delta n - 1 \choose \delta n} \sum_{i=0}^{\gamma n}{n(1-\delta+\gamma) \choose i}(q-1)^i
\end{eqnarray}
Note that \eqref{eqn:ball1} follows from \cref{lem:insertion-ball} and \eqref{eqn:ball2} is true because the term in summation reaches its maximum when $i=n\gamma$. To see this, we test the ratio between the value of the term for two consecutive parameter values $i$ and $i+1$:
$$\frac{{n(1-\delta+\gamma) \choose i+1}(q-1)^{i+1}}{{n(1-\delta+\gamma) \choose i}(q-1)^i} = \frac{n(1-\delta+\gamma)-i}{i+1}(q-1)$$
Note that 
$\frac{n(1-\delta+\gamma)-i}{i+1}(q-1)\geq 1 \Leftrightarrow \frac{n(1-\delta+\gamma)-i}{i+1} \geq \frac{1}{q-1} \Leftrightarrow iq+1 \leq n(1-\delta+\gamma)(q-1)
$. This holds for all $i\leq n\gamma$ because:
$$n\gamma q+1 \leq n(1-\delta+\gamma)(q-1) \Leftrightarrow n\gamma < n(1-\delta)(q-1).$$
Finally, \eqref{eqn:ball3} follows from the definition of the $q$-qry entropy function
\begin{equation}
H_q(x) = x\log_q(q-1) - x\log_q x - (1-x) \log_q(1-x)\label{eqn:entropy}
\end{equation}
and the equation ${n\choose np} = q^{n(H_q(p)-p\log_q(q-1))+o(n)}$.
\end{proof}
}

Using the bound on the size of the insertion-deletion radius presented above, we give the following inner bound on the highest achievable rate for $(\gamma, \delta)$-list-decodable codes derived by analysis of the list-decodability of random codes.

\begin{theorem}\label{thm:inner-bound-main}
For any integer $q\geq 2$, $\delta \in \left[0, 1-\frac{1}{q}\right]$ and $\gamma \in [0, (q-1)(1-\delta)]$, a family of random $q$-ary codes with rate
\noITW{$$0 \leq R < 1- 
(1-\delta+\gamma)H_q\left(\frac{\gamma}{1-\delta+\gamma}\right)
- H_q\left(\delta\right) 
+ \gamma\log_q(q-1)
$$}
\ITWonly{$R < 1- 
(1-\delta+\gamma)H_q\left(\frac{\gamma}{1-\delta+\gamma}\right)
- H_q\left(\delta\right) 
+ \gamma\log_q(q-1)
$}
is list-decodable from $\gamma n$ insertions and $\delta n$ deletions\noITW{ with high probability.}\ITWonly{ WHP.}
\end{theorem}
\noITW{
We remark that the condition $\gamma \leq (q-1)(1-\delta)$ does not weaken the statement of the theorem since there is no positive-rate family of codes that can list-decode from $(\gamma, \delta)$ fraction of errors where $\gamma \geq (1-\delta)(q-1)$. To see this, similar to the proof of \cref{thm:outer bound}, one can think of an adversary that reduces the alphabet to one of size $q-d$ in the first $n\alpha$ symbols of the message and to $q-d-1$ in the rest of it where $d = \lfloor \delta q \rfloor$ and $\alpha = 1 - \delta q + d$. With $$\gamma n \geq n(1-\delta)(q-1) \geq n\alpha\left(1-\frac{d}{q}\right)(q-d-1) + n(1-\alpha)\left(1-\frac{d+1}{q}\right)(q-d-2)$$ insertions, the adversary can turn any sent message into a string out of an ensemble of ${q\choose q-d}{q\choose q-d-1}$ strings by turning each message into repetitions of the reduce alphabet members as described in the proof of \cref{thm:outer bound}.
}
\ITWonly{
\begin{proof}[Proof Sketch (Full proof in \cite{FullVersion})]
We use \cref{lem:insdel-ball} to bound the probability of a fixed string falling within a certain ball around a codeword of a random code. We then bound above the probability of such \noITW{fixed} string being close to $l+1$ codewords, i.e., violating the $l$-list-decodability condition. Taking an upper bound over all such center strings, we bound above the probability of a random code not being list-decodable and find a range for $R$ where such probability is negligible.
\end{proof}
}
\noITW{
\begin{proof}[Proof of \cref{thm:inner-bound-main}]
Take the random codeword $X\in[q]^{n}$ and some string $y\in[q]^{n(1-\delta+\gamma)}$ of length $n'=n(1-\delta+\gamma)$. Using \cref{lem:insdel-ball}, the probability of $X$ being inside the insertion-deletion sphere of deletion-radius $\delta'n' = \gamma n$ and insertion-radius $\gamma'n' = \delta n$ of $y$ is
\begin{eqnarray*}
\Pr\{X\in\mathcal{B}(y, \gamma' n', \delta' n')\} %&=&\Pr\{X\in\mathcal{B}(y, \delta n, \gamma n)\}\\
&\leq& \frac{
q^{n'\left(H_q(\delta')
+ (1-\delta'+\gamma')H_q\left(\frac{\gamma'}{1-\delta'+\gamma'}\right) 
- \delta'\log_q(q-1) 
\right)+o(n)}
}{q^n}\\
&=& \frac{
q^{n(1-\delta+\gamma)\left(H_q(\frac{\gamma}{1-\delta+\gamma})
+ \frac{1}{1-\delta+\gamma}H_q\left(\delta\right) 
- \frac{\gamma}{1-\delta+\gamma}\log_q(q-1)
\right)+o(n)}
}{q^n}\\
&=& q^{n\left(H_q(\delta)
+ (1-\delta+\gamma)H_q\left(\frac{\gamma}{1-\delta+\gamma}\right) 
- \gamma\log_q(q-1) 
 - 1\right)+o(n)}
\end{eqnarray*}
For the random code $C$ with rate $R$ to not be $l$-list decodable for some integer $l$, there has to exists some string $y\in[q]^{n(1-\delta+\gamma)}$ that can be obtained by $l+1$ codewords of $C$ via $\delta n$ deletions and $\gamma n$ insertions, i.e., codewords that lie in $\mathcal{B}(y, \delta n, \gamma n)$. Using the union bound over all $y\in[q]^{n(1-\delta+\gamma)}$, the probability of the existence of such $y$ is at most.
 \begin{eqnarray}
&&q^{n(1-\delta+\gamma)} \left(q^{nR}\right)^{l+1} \left(q^{n\left(H_q(\delta)
+ (1-\delta+\gamma)H_q\left(\frac{\gamma}{1-\delta+\gamma}\right) 
- \gamma\log_q(q-1) 
 - 1\right)+o(n)}\right)^{l+1}\nonumber\\
&=&q^{n(l+1)
\left(R + H_q(\delta)
+ (1-\delta+\gamma)H_q\left(\frac{\gamma}{1-\delta+\gamma}\right) 
- \gamma\log_q(q-1) 
 - 1  + \frac{1-\delta+\gamma}{l+1}+o_l(1)
\right)
 }\label{eqn:list-dec-prob}
 \end{eqnarray}
 Note that we used the trivial bound ${q^{nR} \choose l+1} \leq \left(q^{nR}\right)^{l+1}$ in the above calculation.
Equation \eqref{eqn:list-dec-prob} implies that as long as 
% $$R <  1
% -H_q(\delta)
% - (1-\delta+\gamma)H_q\left(\frac{\gamma}{1-\delta+\gamma}\right) 
% + \frac{\delta - (\delta - \gamma)^2}{1-\delta+\gamma}\log_q(q-1),
% $$
 $$R <  1- 
 (1-\delta+\gamma)H_q\left(\frac{\gamma}{1-\delta+\gamma}\right)
 - H_q\left(\delta\right) 
+ \gamma\log_q(q-1),
 $$
for an appropriately large $l=O_{\gamma,\delta,q}(1)$, the exponent of \eqref{eqn:list-dec-prob} is negative and, therefore, the family of random codes is $(\gamma, \delta)$-list-decodable with high probability.
\end{proof}
}

\noITW{
We next show that this bound is stronger than the one presented by \cite{liu2019list} for the case of alphabet size $q\geq 3$.
\begin{theorem}\label{thm:compare-lower-bounds-q3}
The bound presented in \cref{thm:inner-bound-main} is stronger (i.e. larger) than the following that is provided in Lemma 15 of \cite{liu2019list} for any $\delta$ and $\gamma$.
$$1- 
 (1-\delta+2\gamma)H_q\left(\frac{\gamma}{1-\delta+2\gamma}\right)
 - H_q\left(\delta\right) 
+ \gamma\log_q(q-1)$$
\end{theorem}
\begin{proof}
To prove this, one simply needs to verify that 
\begin{eqnarray*}
&&(1-\delta+\gamma)H_q\left(\frac{\gamma}{1-\delta+\gamma}\right)
\leq
(1-\delta+2\gamma)H_q\left(\frac{\gamma}{1-\delta+2\gamma}\right)\\
&\Leftrightarrow&
-\gamma \log_q\left(\frac{\gamma}{1-\delta+\gamma}\right)
-(1-\delta)\log_q\left(\frac{1-\delta}{1-\delta+\gamma}\right)
+\gamma \log_q(q-1)\\
&&\leq -\gamma \log_q\left(\frac{\gamma}{1-\delta+2\gamma}\right)
-(1-\delta+\gamma)\log_q\left(\frac{1-\delta+\gamma}{1-\delta+2\gamma}\right)+\gamma \log_q(q-1)\\
&\Leftrightarrow&
-\gamma \log_q\left(\frac{\gamma}{1-\delta+\gamma}\right)
-(1-\delta)\log_q\left(\frac{1-\delta}{1-\delta+\gamma}\right)\\
&&\leq -\gamma \log_q\left(\frac{\gamma}{1-\delta+2\gamma}\right)
-(1-\delta+\gamma)\log_q\left(\frac{1-\delta+\gamma}{1-\delta+2\gamma}\right)\\
&\Leftrightarrow&
2\gamma \log_q\left(\frac{1-\delta+\gamma}{1-\delta+2\gamma}\right) \leq (1-\delta)\log_q\left(\frac{(1-\delta)(1-\delta+2\gamma)}{(1-\delta+\gamma)^2}\right)\\
&\Leftrightarrow&2(1-\delta+\gamma)\log_q (1-\delta+\gamma) \leq
(1-\delta)\log_q(1-\delta)
+(1-\delta + 2\gamma)\log_q (1-\delta + 2\gamma)
\end{eqnarray*}
The last line of the above sequence of equations is true due to the 
$f\left(\frac{a+b}{2}\right)
\leq \frac{f(a) + f(b)}{2}$ inequality that holds for the convex function $f(x) = x \log_q x$ and points $a = 1-\delta$ and $b = 1-\delta+2\gamma$.
Therefore, the claim we began with is correct.
\end{proof}
}

%section{Conclusions}
% Get inner bounds which match our outer bound or at least incorporate the existence results of \cite{} for 

\newpage

\appendix
\begin{center}
\bfseries \huge Appendices
\end{center}
% \shortOnly{
% \section{Proof of \cref{lem:optimal-gamma}}\label{app:optimal-gamma}
% \ProofOfOptimalGammaTheorem
% }

\section{Proof of the Convexity of $f(\gamma, \delta)$}\label{appdx:proofOfConvexity}
\pushQED{\qed}
In this section, we show that the bivariate function 
$$f(\gamma, \delta) = (1-\delta)\left[
\left(1 + \frac{\gamma}{1-\delta}\right)\log_q\frac{q(1-\delta)}{\frac{\gamma}{1-\delta}+1}
-\frac{\gamma}{1-\delta}\cdot\left(\log_{q}\frac{q(1-\delta)-1}{\frac{\gamma}{1-\delta}}\right)
\right]$$
is convex.
To prove the convexity, our general strategy is to show that the Hessian matrix of $f$ is positive semi-definite. In order to do so, we take the following steps: We first characterize a domain $D$ for $f(\gamma, \delta)$, over which we analyze the convexity. We then calculate the Hessian matrix of the function $f$, $H_f$. To show the positive semi-definiteness of $H_f$, we form its characteristic polynomial and then show that both of its solutions are real and non-negative -- meaning that both eigenvalues of $H_f$ are non-negative over the domain $D$. This would imply that $H_f$ is positive semi-definite and, hence, $f$ is convex over $D$.

\fullOnly{\paragraph{Determining the domain $D$.}}
\shortOnly{\paragraph*{Determining the domain $D$}}
Let us begin with describing the domain $D$. As stated in \Cref{sec:prior-work}, for the purposes of this paper, we only consider the error rates that are within $\delta \in [0, 1-1/q]$ and $\gamma \in [0, q-1]$. Note that for any fixed value $\delta \in [0, 1-1/q)$, $f(\gamma=0, \delta)$ is positive. We will show that as $\gamma$ grows, the value of $f(\gamma, \delta)$ continuously drops until it reaches zero at 
$\gamma = (1-\delta)(q - q\delta - 1)$. This suggests that the domain $D$ has to be defined as follows:
$$D = \left\{(\gamma, \delta)\ |\ 0\leq \delta \leq 1 - \frac{1}{q}, 0 \leq \gamma \leq (1-\delta)(q - q\delta - 1) \right\}.$$
To show the claim above, we demonstrate two simple facts: (I) The partial derivative of $f$ with respect to $\gamma$ is negative within $0 \leq \gamma \leq (1-\delta)(q - q\delta - 1)$ and, (II) $f(\gamma, \delta) = 0$ for $\gamma = (1-\delta)(q - q\delta - 1)$.

To see claim (a), note that 
$$\frac{\partial f}{\partial \gamma} = \log_q \left(\frac{(1-\delta)^2 q}{1 -\delta +\gamma}\right)-\log_q \left(\frac{(1-\delta) (q - q\delta -1)}{\gamma }\right) 
=
\log _q \left(\frac{q(1-\delta)\gamma}{(1 - \delta + \gamma)(q - q\delta - 1)}\right)
$$
which is non-positive as long as
\begin{eqnarray}
\frac{\partial f}{\partial \gamma}\leq 0 &\Leftrightarrow& \frac{q(1-\delta)\gamma}{ (1-\delta+\gamma)(q - q\delta - 1)} < 1\nonumber\\ 
&\Leftrightarrow& q(1-\delta)\gamma \leq (1-\delta+\gamma)(q - q\delta - 1) \label{eqn:derivativeSign}\\
&\Leftrightarrow&
\gamma \leq (1-\delta)(q - q\delta - 1) \nonumber
\end{eqnarray}
Note that \eqref{eqn:derivativeSign} is valid since $1-\delta+\gamma \geq 0$ and $\delta \leq 1 - \frac{1}{q} \Rightarrow q - q\delta - 1 \geq 0$.
One can also easily evaluate $f(\gamma, \delta)$ for $\gamma = (1-\delta)(q - q\delta - 1)$ to confirm claim (b).

\fullOnly{\paragraph{Hessian Matrix and Characteristic Polynomial.}}
\shortOnly{\paragraph*{Hessian Matrix and Characteristic Polynomial}}
We now proceed to calculating the Hessian matrix of $f$ and forming its characteristic polynomial.

\begin{eqnarray}
H_f&=&\begin{bmatrix}
 H_{1, 1} & H_{1, 2}\\
 H_{2, 1} & H_{2, 2}
\end{bmatrix}
=\begin{bmatrix}
 \frac{\partial^2 f}{\partial \gamma^2} & \frac{\partial^2 f}{\partial \gamma \partial \delta}\\ \\
 \frac{\partial^2 f}{\partial \delta \partial\gamma} & \frac{\partial^2 f}{\partial \delta^2}\\
\end{bmatrix}\nonumber\\
&=&\begin{bmatrix}
 \frac{1-\delta}{\gamma  (1-\delta +\gamma) \log (q)} & \frac{\gamma +(1 - \delta)^2 q}{(1-\delta) (1 - \delta +\gamma) (q - q\delta -1) \log (q)} \\ \\
 \frac{\gamma +(1-\delta)^2 q}{(1-\delta) (1 - \delta +\gamma) (q - q\delta -1) \log (q)} & \frac{(1-\delta)^3 q^2 (1 - \delta + 2 \gamma)+(2 (1-\delta) q -1)\left(\gamma ^2-\gamma  (1-\delta)-(1-\delta)^2\right)}
 {(1-\delta)^2 (1-\delta + \gamma) (q - q\delta -1)^2 \log (q)}\label{eqn:Hessian}
\end{bmatrix}
\end{eqnarray}

We prove semi-definiteness by deriving the characteristic polynomial of $H_f$. The eigenvalues of $H_f$ are the roots of this polynomial.

\begin{eqnarray}
\det(H_f - \lambda I) = 0 &\Leftrightarrow &
\left|\begin{matrix}
H_{1, 1}-\lambda & H_{1, 2}\\
 H_{2, 1} & H_{2, 2}-\lambda
\end{matrix}\right|=0\nonumber\\
&\Leftrightarrow&
(H_{1, 1}-\lambda)(H_{2, 2}-\lambda) - H_{1, 2}H_{2, 1} = 0\nonumber\\
&\Leftrightarrow&
\lambda ^ 2 - (H_{1, 1} + H_{2, 2})\lambda
 + (H_{1, 1}H_{2, 2} - H_{1, 2}H_{2, 1}) = 0\label{eqn:characteristicPoly}
\end{eqnarray}
To prove the semi-definiteness of $H_f$, we show that both of its eigenvalues are non-negative, or equivalently, the roots of the quadratic equation \eqref{eqn:characteristicPoly} are both non-negative. We remind the reader of the straightforward fact that in a quadratic equation of form $x^2 - S x + P = 0$, $S$ is the sum of the roots and $P$ is their product. Therefore, to show that both roots are non-negative, we only need to show that $S$ and $P$ are both non-negative and that the roots are both real, i.e., $\Delta = S^2-4P \geq 0$.
\begin{enumerate}
    \item $H_{1, 1} + H_{2, 2} \geq 0$
    \item $H_{1, 1}H_{2, 2} - H_{1, 2}H_{2, 1} \geq 0$
    \item $(H_{1, 1} + H_{2, 2})^2-4(H_{1, 1}H_{2, 2} - H_{1, 2}H_{2, 1}) \geq 0$
\end{enumerate}

In the remainder of this section, we prove the three items listed above.

\fullOnly{\paragraph{Proof of Item 1.}}
\shortOnly{\paragraph*{Proof of Item 1}}
Given \eqref{eqn:Hessian}, we have that 
\begin{eqnarray}
H_{1, 1} + H_{2, 2} &=& \frac{1}{\log (q)}\cdot \left(\frac{1}{\gamma }+\frac{\gamma  q^2}{(q - q\delta -1)^2}+\frac{\gamma  (1 - \gamma-\delta )}{(1-\delta)^2 (1-\delta+\gamma)}\right)\nonumber\\
&=& \frac{1}{\log (q)}\cdot \left(\frac{1}{\gamma }+\frac{\gamma }{(1 - 1/q -\delta)^2}+\frac{\gamma}{(1-\delta) (1-\delta+\gamma)}
-\frac{\gamma^2}{(1-\delta)^2 (1-\delta+\gamma)}
\right)
\end{eqnarray}
Note that terms $\frac{1}{\gamma }$ and $\frac{\gamma}{(1-\delta) (1-\delta+\gamma)}$ are positive. Therefore, to prove that $H_{1, 1} + H_{2, 2}$ is non-negative, we show that 
\begin{equation}
\frac{\gamma }{(1 - 1/q -\delta)^2}
-\frac{\gamma^2}{(1-\delta)^2 (1-\delta+\gamma)}\geq 0.\label{eqn:item1-step2}
\end{equation}
Note that 
\begin{eqnarray}
1-\delta-1/q < 1-\delta \Rightarrow 
\frac{\gamma}{(1-1/q-\delta)^2} \geq \frac{\gamma}{(1-\delta)^2}.\label{eqn:item1-step3}
\end{eqnarray}
Also, since $\delta \leq 1$, we have that $1-\delta + \gamma \geq \gamma \Rightarrow \frac{\gamma}{1-\delta + \gamma} \leq 1$. Thus, \eqref{eqn:item1-step3} holds even if one multiplies its right-hand side by $\frac{\gamma}{1-\delta + \gamma}$ which gives \eqref{eqn:item1-step2} and, thus, proves Item 1.

\fullOnly{\paragraph{Proof of Item 2.}}
\shortOnly{\paragraph*{Proof of Item 2}}
 Given \eqref{eqn:Hessian}, we have that 
$$H_{1, 1}H_{2, 2} - H_{1, 2}H_{2, 1} =
\frac{(1-\delta)^2 (q-q\delta - 1)^2
-\gamma ^2
}{\gamma  (1-\delta)^2 (1-\delta+\gamma) (q - q\delta -1)^2 \log ^2(q)}.
$$
Note that all terms in the denominator are positive. The numerator is positive as well since, as mentioned earlier, the domain $D$ is defined only to include points $(\gamma,\delta)$ where $\gamma \leq (1-\delta)(q-q\delta-1)$.

\fullOnly{\paragraph{Proof of Item 3.}}
\shortOnly{\paragraph*{Proof of Item 3}}
This claim can be simply shown to be true as follows: $$ (H_{1, 1} + H_{2, 2})^2-4(H_{1, 1}H_{2, 2} - H_{1, 2}H_{2, 1}) = 
(H_{1, 1} - H_{2, 2})^2 + 4H_{1, 2}H_{2, 1} = 
(H_{1, 1} - H_{2, 2})^2 + 4H^2_{1, 2}
$$
The final term is trivially positive. Note that the last step follows from the fact that $H_{1, 2}=H_{2, 1}$.
\popQED

\shortOnly{
\section{Missing Figures}
We have moved two large pictures to this section in the appendix to adhere with the 15-page conference submission limit. Please find \cref{fig:q=2,fig:q=5-intro} in the following pages.
\FancyPicture
\FigureNiceDeltas
}

\newpage

\bibliographystyle{plain}
\bibliography{bibliography}

\end{document}